\def\BibTeX{{\rm B\kern-.05em{\sc i\kern-.025em b}\kern-.08em
    T\kern-.1667em\lower.7ex\hbox{E}\kern-.125emX}}
\renewcommand{\fnum@algorithm}{\fname@algorithm}
\newcommand{\real}{\ensuremath{\mathbb{R}}}
\newcommand{\prob}{\ensuremath{\mathbb{P}}}
\newcommand{\naturals}{\ensuremath{\mathbb{N}}}
\newcommand{\subscr}[2]{#1_{\textup{#2}}}
\newcommand{\Norm}[1]{\|#1\|}
\newtheorem{lemma}{Lemma}
\theoremstyle{remark}
\newtheorem*{remark}{Remark}
\newcommand\l@method{\@dottedtocline{1}{1.5em}{2.3em}}\makeatother
\begin{document}

\history{Date of current version Oct 23, 2023.}
\doi{xx.xxxx/ACCESS.2023.xxxxxxx}
\title{Parallel Quantum Rapidly-Exploring Random Trees}

\author{\uppercase{Paul Lathrop}\authorrefmark{1}, \IEEEmembership{Student Member, IEEE},
\uppercase{Beth Boardman}\authorrefmark{2}, and Sonia Mart\'{i}nez\authorrefmark{3},
\IEEEmembership{Fellow, IEEE}}

\address[1]{Department of Mechanical and Aerospace Engineering, University of California, San Diego, La Jolla, CA, 92092 USA, and Los Alamos National Laboratory, Los Alamos, NM 87545 USA (e-mail: pdlathrop@gmail.com)}
\address[2]{Los Alamos National Laboratory, Los Alamos, NM 87545 USA (e-mail: bboardman@lanl.gov)}
\address[3]{Department of Mechanical and Aerospace Engineering, University of California, San Diego, La Jolla, CA, 92092 USA (e-mail:soniamd@ucsd.edu)}
\tfootnote{This work was supported by Los Alamos National Laboratory and is approved
for release under LA-UR-23-31988.}

\markboth
{Lathrop \headeretal: Parallel Quantum Rapidly-Exploring Random Trees}
{Lathrop \headeretal: Parallel Quantum Rapidly-Exploring Random Trees}

\corresp{Corresponding author: Paul Lathrop (e-mail: pdlathrop@gmail.com).}

\begin{abstract}
In this paper, we present the Parallel Quantum Rapidly-Exploring Random Tree (Pq-RRT)
algorithm, a parallel version of the Quantum Rapidly-Exploring
Random Trees (q-RRT) algorithm~\cite{PL-BB-SM:23}. Parallel Quantum RRT is a parallel quantum algorithm formulation of a sampling-based motion planner that uses Quantum Amplitude Amplification to search databases of reachable states for addition to a tree. In this work we investigate how parallel quantum devices can more efficiently search a database, as the quantum measurement process involves the collapse of the superposition to a base state, erasing probability information and therefore the ability to efficiently find multiple solutions. Pq-RRT uses a manager/parallel-quantum-workers formulation, inspired by traditional parallel motion planning, to perform simultaneous quantum searches of a feasible state database. We present results regarding likelihoods of multiple parallel units finding any and all solutions contained with a shared database, with and without reachability errors, allowing efficiency predictions to be made. We offer simulations in dense obstacle environments showing efficiency, density/heatmap, and speed comparisons for Pq-RRT against q-RRT, classical RRT, and classical parallel RRT. We then present Quantum Database Annealing, a database construction strategy for Pq-RRT and q-RRT that uses a temperature construct to define database creation over time for balancing exploration and exploitation.
\end{abstract}
\begin{keywords}
Sampling-Based Motion Planning, Quantum Computing,  Parallel Computing
\end{keywords}
\titlepgskip=-21pt

\maketitle
\section{Introduction}
\IEEEPARstart{Q}{uantum} computing algorithms have shown promise in
accelerating the search for solutions when applied to computationally
intensive, complex problems. More efficient solutions have been found and proven with quantum computing in fields such as pure
science simulations~\cite{FT-AC-SC-DG:20},
cryptography~\cite{SP-UA-LB-MB:20}, and machine
learning~\cite{JB-PW-NP-PR-NW-SL:17}.

With respect to robotics and motion planning, quantum algorithms have also been found to increase speed and efficiency. 
The heart of quantum advantage is derived from quantum parallelism and the ability to perform simultaneous computations on superpositions of states, which motivated our work in~\cite{PL-BB-SM:23}. To aid in sampling-based motion planning, the key efficiency bottleneck is the search for dynamically-reachable, obstacle-free states to connect to the search tree. Unstructured databases of possible next states can be searched simultaneously with Quantum Amplitude Amplification (QAA) to efficiently find an amenable state, but the quantum measurement process forces information loss through the collapse of the superposition. Although all database states are searched in parallel, the process can only return one state. 
In this work we are motivated by efforts in traditional motion planning to parallelize sampling-based planners for efficient path generation using multi-core computers and GPUs, and the ability of quantum algorithms to perform parallel computations. We seek to explore ways for quantum motion planning algorithms to allow multiple, parallel quantum computers to more efficiently search a database of states and return multiple possible solutions. 

\subsection{Literature Review} 
In this section, we provide an overview of quantum computing as it
applies to robotic applications, non-quantum efforts to parallelize
sampling-based motion planning algorithms, the use of annealing and
temperature constructs as they applies to motion planning, and how
this is related to our efforts to increase the efficiency of q-RRT.

Quantum algorithms have been applied to a range of robotic and motion planning-related applications. They have been used to solve
generalized optimization
problems~\cite{MC-AA-RB-SCB-SE-KF-JRM-KM-XY-LC:21}, estimate stochastic
processes~\cite{DSA-CPW:99}, and provide speedup to Monte Carlo
methods~\cite{AM:15}. They have also performed quantum searches~\cite{RP:13} of physical
regions~\cite{SA-AA:03}, found marked elements of a Markov
chain~\cite{FM-AN-JR-MS:11}, and searched more abstract
spaces such as a search engine
network~\cite{ESB-JD-JGG-DZ:12}. A more detailed overview on how quantum computing
has been applied to robotics, along with open questions, can be found at~\cite{CP-MB-HP-MH-BD:19}. 

Quantum computing has also been used to improve motion planning specifically. Quantum reinforcement learning~\cite{KR-QZ-SNB-PF-JRB:21} has increased the speed
and robustness (when compared to classical, non-quantum algorithms) of robotic reinforcement
learning algorithms when learning optimal policies in gridded environments~\cite{DD-CC-LH:07}~\cite{DD-CC-HL-TJT:08}.
An additional use of quantum computing for robotic trajectory planning is addressed in~\cite{LM:15}, which uses a Quantum Evolutionary Algorithm to search for optimal trajectories with a population dynamics/mutation quantum algorithm. Lastly, the review~\cite{NBD-FLP-APA:22} examines quantum control algorithms, and the topic of feedback control accomplished using quantum computing. The work at hand is distinct from the state of the art of quantum computing as applied to motion planners, as we apply such methods to sampling-based planners, which have the advantage of providing fast solutions in high dimensional environments alongside providing probabilistic completeness guarantees~\cite{SML:06}. Besides our previous q-RRT algorithm, quantum computing has not, to the best of our knowledge, been applied to sampling-based planning algorithms. A further overview on how quantum computing has been applied to motion planning and robotics can be found in~\cite{PL-BB-SM:23}.

In the field of non-quantum motion planning, sampling-based planning algorithms
such as Probabilistic Roadmaps (PRMs)~\cite{NA-YW:96} and
Rapidly-exploring Random Trees (RRT)~\cite{JB-LK-JL-TL-RM-PR:96} have
taken the forefront due to their efficiencies in high-dimensional planning spaces
and ability to handle complex robot dynamic
constraints~\cite{ME-MS:14}, such as robotic grasping tasks, autonomous
vehicle planning, and UAV navigation. RRTs and PRMs have been extended in many
ways to improve their sampling speed, exploration ability,
and collision-checking subroutine. RRT* is an important extension regarding path optimality through rewiring~\cite{SK-EF:11}. An overview of sampling-based motion planning can be
found in the textbook~\cite{SML:06}. Three ways to increase
motion planning efficiency are the use of quantum computing, the use
of parallel algorithms and architectures, and the use of sampling
strategies. In this work we consider the combinations of the three
approaches through parallel quantum computing and database
construction strategies, which is akin to sampling strategies in
classical algorithms.

Motion planning algorithms have been written for parallel tree
creation~\cite{MStrandberg:04} and parallel computation with
GPUs~\cite{JB-SK-EF:11}. In~\cite{MStrandberg:04}, local trees are built in parallel to explore difficult regions, and guidelines on when to create and grow local trees are made. In~\cite{JB-SK-EF:11}, the authors parallelize the collision-checking procedure using GPUs to increase
optimal planning speed in high-dimensional spaces.  The
work~\cite{JI-RA:12} outlines Parallel RRT and Parallel RRT*, which
uses lock free parallelism and partition based sampling to provide superlinear
speedup to RRT and RRT*. The work~\cite{DD-TS-JC:13} compares parallel
versions of RRTs on large scale distributed memory architectures,
including \emph{or}-Parallel RRT (multiple simultaneous individual
RRT's) and two methods of collaborative single RRT, Distributed RRT
and Manager-Worker RRT.  The work~\cite{DD-TS-JC:13} also includes a
succinct literature review regarding parallel motion planning and
Parallel RRT. For comparison purposes, in the work at hand we consider
a class of Manager-Worker Parallel RRTs, focusing on the
  parallelization of the collision-checking procedure of RRT.
This is justified by the findings in~\cite{DD-TS-JC:13},
which concludes that for variable expansion cost cases, where the
communication load is insignificant compared to the computation load,
Manager-Worker RRT outperforms, or nearly matches, Distributed RRT in
studied passage, corridor, and roundabout environments.

The work~\cite{RP-AR-AD-SD:21} discusses parallel quantum computing
architectures and control strategies for distributed quantum machines,
noting that multiple few-qubit devices may be more technologically
feasible than single many-qubit devices. In this work, we consider
parallel quantum computers to be multiple simultaneous identical
quantum devices governed by a classical device in order to perform
parallel computation.

A second extension to increase the efficiency of q-RRT relies on
database construction, where we employ a method inspired by simulated
annealing.  Simulated annealing~\cite{SK-CDG-MPV:83} is an
optimization technique that relies on decreasing a
  temperature parameter to find global maxima and minima of a
  nonconvex optimization problem, which is somewhat robust to local
features. Temperature acts as a guide to the probability of
accepting a worse state, allowing an optimizer to explore past local
features while eventually settling on estimates of global
optima when temperature falls. An overview can be
found at~\cite{RWE:90} and~\cite{PJMVL-EHLA:87}. 

Although a
temperature construct is mainly used in the context of optimization,
similar annealing ideas have also been applied to motion planning,
and we intend to apply them to guiding the exploration vs
exploitation trade-off of the planning algorithm. In a manner somewhat reminiscent of annealing, the
work~\cite{AVO-JF-APM:12} uses a dynamic reaction-diffusion process to
propagate, then contract, a search area for a goal location.
Additionally, the covariant Hamiltonian optimization for motion
planning (CHOMP) method~\cite{MZ-NR-ADD-MP-MK-CMD-JAB-SSS:13} uses
gradient techniques to improve trajectories and solve motion planning
queries. CHOMP uses simulated annealing to avoid local minima in
trajectory optimization and not to guide sampling-based motion
planning itself. The work~\cite{SMP-IS:14} uses simulated annealing to
balance exploitation of Sampling-Based A* (SBA*) and exploration of
Rapidly-exploring Random Tree* (RRT*). As cooling occurs, the
probability of choosing the exploration strategy drops and the
probability of choosing the exploitation strategy
increases. Similarly, the transition based
RRT~\cite{LJ-JC-TS:08},~\cite{LJ-JC-TS:10} method uses a temperature
quantity inspired by simulated annealing to define the difficulty
level of transition tests to accept higher cost configurations in an
effort to explore a configuration space. Similar to this work, we use
a temperature quantity to guide the level of exploration, but because
we are using quantum computing with q-RRT to perform motion planning,
temperature factors into database construction rather than individual
samples themselves.

In~\cite{PL-BB-SM:23}, we introduced how quantum computing methods can
be applied to sampling-based motion planning in two ways, a full path
database search and an RRT-based single-state database search
q-RRT. The Quantum Rapidly-Exploring Random Trees algorithm, q-RRT,
uses Quantum Amplitude Amplification (QAA) to search databases of
possible reachable states. A focus of our work~\cite{PL-BB-SM:23} was
in estimating solution likelihood, so QAA could be performed an
optimal number of times, through the use of random square lattice
environments and numerical simulations. In the work at hand, we shift
focus to address a particular shortcoming of quantum computers and
qubits: this approach suffers from the limits of quantum mechanics, as
qubits cannot be copied and the quantum measurement process admits a
single solution. We study how quantum devices, working
in parallel, can efficiently solve motion planning problems, while
generalizing environments away from random square lattices. Instead of
focusing on how many solutions exist within a database, we focus on
how multiple solutions can efficiently be found from a single
database.

\subsection{Contributions}
The main contributions of this work are the following. 

\begin{itemize}
\item Creation of a parallel quantum computing variation to q-RRT, called Parallel q-RRT (Pq-RRT), which uses a parallel quantum computing structure to allow multiple solutions from a single database in general obstacle environments;
\item Characterization of key probability values for multiple quantum workers searching a shared database, with and without false positive and false negative oracle errors in order to minimize efficiency loss;
\item Creation of a construction strategy for quantum-search databases, called Quantum Database Annealing, which uses a temperature construct to select sample distances and balance exploration vs exploitation;
\item Demonstration (through simulation) of the increased efficiency of Pq-RRT over q-RRT, as compared to the efficiency increase of Parallel RRT over RRT;
\item Simulations of faster tree exploration with Quantum Database Annealing as compared to standard uniform-sampling database construction.
\end{itemize}

\section{Organization and Notation}
In Section~\ref{section:qRRT}, for reference purposes we provide a working definition of q-RRT from~\cite{PL-BB-SM:23}. In Sections~\ref{section:pqRRT} and~\ref{section:pqRRTprob}, we define Parallel q-RRT then key probability results for Pq-RRT, respectively. In Section~\ref{section:annealing}, we define the database construction strategy Quantum Database Annealing. In Section~\ref{section:resultsmain}, we provide runtime and efficiency results for q-RRT and Pq-RRT as compared to RRT and Parallel RRT. In section~\ref{section:resultsspeed}, we provide heatmaps of q-RRT's node placement speed over RRT, and in Section~\ref{section:resultsthinchannel} we provide narrow corridor results for q-RRT. In Section~\ref{section:resultsannealing}, we provide tree comparisons between Quantum Database Annealing and standard database construction.
\subsection{Notation}
The set of natural numbers is denoted as $\naturals$,
and similarly, $\real$ is the set of real numbers, while $\mathbb{C}$
is the set of complex numbers. For $d,p \in \naturals$, we use $\real^d$ to refer to
$d$-dimensional real vector space, and by $x\in \real^d$ a vector in
it. Lastly, the Euclidean norm in $\real^d$ is identified with
$\Norm{\cdot}_2$. 
\subsection{Quantum Computing Basics}
We briefly introduce quantum computing basics and how we use
quantum algorithms to solve motion planning problems. We defer a
 more in depth introduction of quantum computing basics
 to~\cite{PL-BB-SM:23} and a comprehensive circuit-level discussion
 to~\cite{JPreskill:98}. Let $\ket{z}$
refer to the quantum state represented by the qubit $z$. Quantum
computers encode information in basic units called qubits, given as
the superposition of two basis quantum states, $\ket{0}$ and
$\ket{1}$. The set $\{\ket{0},\ket{1}\}$ defines a basis of
quantum states. Qubits maintain probability amplitudes (relative
measurement likelihoods) $\alpha$ and $\beta$. A qubit $\ket{\Psi}$
can exist in a superposition of $\ket{0}$ and $\ket{1}$, of the form
$\ket{\Psi} = \alpha \ket{0}+\beta\ket{1}$ with
$\alpha,\beta \in \mathbb{C}$, $\;|\alpha|^2+|\beta|^2=1$. Multiple
qubits come together to form a memory storage unit called a qubit
register.  The measurement process involves the collapse of the state
$\ket{\Psi}$ to a base state $\{\ket{0},\ket{1}\}$ according to probabilities
$\alpha^2$ and $\beta^2$ (known as the Born rule).

Quantum algorithms perform fast parallel computations on
superpositions in a process known as quantum
parallelism~\cite{DD-CC-HL-TJT:08}. This manipulates the amplitudes
$\alpha$ and $\beta$ of the system, which cannot be known
explicitly, as any measurement collapses the qubit. 
Previously we leveraged quantum parallelism to create a new algorithm for solving motion planning problems.

Our Quantum Rapidly-Exploring Random Tree (q-RRT)
algorithm~\cite{PL-BB-SM:23} uses Quantum Amplitude Amplification
(QAA) with a Boolean oracle function $\mathcal{X}$ that evaluates
reachability of states. QAA uses an oracle $\mathcal{X}$ to increase the
probability of measuring a selected (`good') state $\Psi$ (such that $\mathcal{X}(\Psi) = 1$). The QAA precise definition, mechanism of action,
and discussion can be found at~\cite{GB-PH-MM-AT:02}, page 56. Similar
to~\cite{PL-BB-SM:23}, we take advantage of QAA to quantum search a
size-$N$ unordered database for oracle-tagged items in
$\mathcal{O}(N^{1/2})$ oracle calls, whereas non-quantum search
algorithms require $\mathcal{O}(N)$ calls. 
\section{Parallel Quantum RRT and Quantum Database Annealing}
In Section~\ref{section:qRRT}, we outline the q-RRT Algorithm
from~\cite{PL-BB-SM:23}, followed by a presentation of
the Parallel Quantum RRT Algorithm in Section~\ref{section:pqRRT}, and then probability results for
Pq-RRT in Section~\ref{section:pqRRTprob}. Lastly, in Section~\ref{section:annealing}, we present the
Quantum Database Annealing strategy.

\subsection{Quantum RRT Algorithm}\label{section:qRRT}
The q-RRT algorithm from our previous work~\cite{PL-BB-SM:23} is a tree-based search algorithm based on
RRTs~\cite{JJK-SML:00}.
Quantum RRT uses QAA on a database of
possible parent-child pairs to admit reachable points to the tree. In
this work, q-RRT returns a path (as opposed to the full tree
in~\cite{PL-BB-SM:23}) and has the end condition of finding a goal.
The line-by-line algorithm can be found at\cite{PL-BB-SM:23}, and because of q-RRT's similarities with Pq-RRT Manager (Alg.~\ref{alg:pqmanager}) and Pq-RRT Worker (Alg.~\ref{alg:pqworker}), we omit the line-by-line and instead reference lines of the latter two enumerated algorithm descriptions.

The q-RRT Algorithm takes as input an initial state $x_0$ and a goal
state $\subscr{x}{G}$ in a compact configuration space
$C \subseteq \real^d$, a number of qubit registers $n$, and a quantum
oracle function $\mathcal{X}$. It returns a dynamically feasible
obstacle free path $\gamma$. The q-RRT algorithm adds nodes to graph
$T$ until there is a node within distance $\delta$ of the goal
$\subscr{x}{G}$. To add a node, q-RRT creates a $2^n$ sized database
$D$ of random possible nodes and the nearest parent in $T$ to the
random node, as shown in Alg.~\ref{alg:pqmanager} on
lines~\ref{alg2:linedatabase1}-\ref{alg2:linedatabase2}. A
$1-$to$-1$ mapping $F$ is created between database $D$ and qubit
$\ket{\Psi}$ (shown in Alg.~\ref{alg:pqworker}, line~\ref{algworker:linemapping}). Then, the qubit is
initialized and an equal superposition between states set (shown in Alg.~\ref{alg:pqworker} on
lines~\ref{algworker:lineinitqubit} and~\ref{algworker:linesuperposition} respectively). Let $\mathbf{W}$ be the Walsh-Hadamard transform, the
operator which maps a qubit to an equal superposition of all qubit
states. On lines~\ref{algworker:lineQAAstart}-\ref{algworker:lineQAAend} of Alg.~\ref{alg:pqworker}, the operator $Q$ performs QAA to amplify the probability amplitudes of
correct states as defined by oracle $\mathcal{X}$, which tests the
reachability of random samples to the nearest proposed parent $P$ of
the existing graph $T$, thus ensuring that $T$ is fully reachable. For an analysis into selecting $\subscr{i}{max}$ we refer readers to~\cite{PL-BB-SM:23}.

Measurement is performed and the correct database element selected in Alg.~\ref{alg:pqworker} on
line~\ref{algworker:linemeasure}. After measurement is performed, the quantum state has collapsed and no further information (beyond the selected state) can be gained from the qubit. 
Lines~\ref{algmanager:linestartgoal}-\ref{algmanager:lineendgoal} of Alg.~\ref{alg:pqmanager} allow
a node placed within $\delta$ of $\subscr{x}{G}$ to be admitted to $T$
as $\subscr{x}{G}$, ending the algorithm. The path is returned after
successful loop execution on Alg.~\ref{alg:pqmanager}, line~\ref{algmanager:linereturn}.

\subsection{Parallel Quantum RRT}\label{section:pqRRT}
In this section, we define the Parallel Quantum RRT (Pq-RRT) algorithm
as a manager (Alg.~\ref{alg:pqmanager}) worker
(Alg.~\ref{alg:pqworker}) formulation. The Pq-RRT algorithm performs
reachability tests using a parallel pool of quantum computers, and is a direct extension of q-RRT inspired by parallel motion planning. The
manager algorithm assigns work to the parallel pool and adds results
to the tree $T$. The assigned work consists of each quantum worker performing a reachability check on a
database $D$ using QAA with a quantum oracle, and returning a single database element.
The specific parallelization architecture is chosen for a few reasons. We consider scenarios where generally worker runtime cost dominates the message passing cost (as per~\cite{DD-TS-JC:13}). This rules out such architectures as disjoint workers independently searching for a solution by growing separate trees, which have relatively little message passing but are much less runtime-efficient in finding a solution. 

In the chosen manager-worker scheme, instead of discretizing a search space to allow workers to each grow a separate part of a tree, each worker is tasked with adding a single element to the tree (anywhere). This removes the idleness aspect of workers, as workers do not have to be actively listening for tree updates and do not rely on the work of others to perform their own search. Additionally, because of the probabilistic nature of the quantum search process, the parallel quantum routine generally can have all workers complete work simultaneously. This feature is not possible in non-quantum parallel architectures, as each worker is performing a stochastic search for a solution, which generally takes differing times between workers. In the quantum architecture, however, the runtime to amplify a database is much more consistent, and if each quantum worker is performing the same number of amplifications before measurement, they should complete a search nearly simultaneously. The key difference lies in the goal of the work performed. A solution does not need to be deterministically found for work to be completed (as in the non-quantum case). Work is instead completed when a solution is more likely to be found, which can be standardized for runtime across the workers. 

Alg.~\ref{alg:pqmanager}, the manager algorithm, has the same
inputs and outputs as q-RRT. The worker
algorithm, Alg.~\ref{alg:pqworker}, admits as inputs the current tree
$T$, the number of qubit registers $n$, the quantum oracle function
$\mathcal{X}$, and a copy of the shared database $D$, and returns a
selected element of the database $[\subscr{x}{add},P]$. 

\begin{algorithm}
\caption{$\mathbf{1}$ Pq-RRT Manager, shared database}
\begin{algorithmic}[1] \label{alg:pqmanager}
\renewcommand{\algorithmicrequire}{\textbf{Input:}}
\renewcommand{\algorithmicensure}{\textbf{Output:}}
\REQUIRE $x_0,\subscr{x}{G}\;n,\;\text{oracle } \mathcal{X}$
\ENSURE Path $\gamma$
\STATE Init $p$-worker pool
\STATE Init tree $T$ with root at $x_0$
\WHILE{$\subscr{x}{G}\notin T$} \label{alg2:linedatabaseplace}
\FOR{$i = 1 \text{ to } 2^n$} \label{alg2:linedatabase1}
\STATE $t =$ random point
\STATE $P =$ closest parent of $t$ in $T$
\STATE $D(i) = [t;\:P]$
\ENDFOR \label{alg2:linedatabase2}
\STATE $[\subscr{x}{add},P](k) =\text{ Worker}(T,n,\mathcal{X},D), \text{ for } k\in p$
\FOR{$k = 1 \text{ to } p$}
\IF{$[\subscr{x}{add},P](k) \notin T$ }
\STATE Add $[\subscr{x}{add},P](k)$ to $T$ 
\ENDIF
    \IF{$\|\subscr{x}{add}(k)-\subscr{x}{G}\|<\delta$} \label{algmanager:linestartgoal}
        \STATE $\subscr{x}{add}(k) = \subscr{x}{G}$
    \ENDIF \label{algmanager:lineendgoal}
\ENDFOR
\ENDWHILE
\STATE Return path $\gamma$ from $T$ \label{algmanager:linereturn}
\end{algorithmic}
\end{algorithm}

\begin{algorithm}
\caption{$\mathbf{2}$ Pq-RRT Worker, shared database}
\begin{algorithmic}[1] \label{alg:pqworker}
\renewcommand{\algorithmicrequire}{\textbf{Input:}}
\renewcommand{\algorithmicensure}{\textbf{Output:}}
\REQUIRE $T,n,\mathcal{X},D$
\ENSURE $[\subscr{x}{add},P]$
\STATE Enumerate $D$ via $F:\{0,1\}^n \; \rightarrow D$ \label{algworker:linemapping}
\STATE Init $n$ qubit register $\ket{z} \gets \ket{0}^{\otimes n}$ \label{algworker:lineinitqubit}
\STATE $\ket{\Psi} \gets \mathbf{W}\ket{z}$ \label{algworker:linesuperposition}
\FOR{$i = 1 \text{ to } \subscr{i}{max}$} \label{algworker:lineQAAstart}
\STATE $\ket{\Psi} \gets Q(\mathcal{X})\ket{\Psi}$
\ENDFOR \label{algworker:lineQAAend}
\STATE $[\subscr{x}{add},P] \gets F($measure$(\ket{\Psi}))$ \label{algworker:linemeasure}
\STATE Return $[\subscr{x}{add},P]$ 
\end{algorithmic}
\end{algorithm}

Two versions of this parallel formulation are possible, shared and
unshared database. The fundamental difference is whether parallel pool
workers create a database or perform QAA on copies of the same
database $D$, which is created by the manager. For the shared database
version, as shown in Fig.~\ref{fig:diagramshared}, in
Alg.~\ref{alg:pqmanager}
Lines~\ref{alg2:linedatabase1}-\ref{alg2:linedatabase2} the manager
creates the database $D$ and passes copies to each worker $k\in p$, as
shown by the inputs to Alg.~\ref{alg:pqworker}. In this way, the
workers would ``share'' and search (copies of) the same
database. The manager ignores additional identical solutions returned by different workers, which is a fast process given that the workers essentially are returning an index to a database element. 

For an unshared database, as shown in Fig.~\ref{fig:diagramunshared},
the database construction step is performed within the worker
algorithm (Alg.~\ref{alg:pqworker}), which can be a classical worker until
QAA is performed. 

The advantage of the shared database approach is an 
increase in database-use-efficiency due to extracting multiple possible
reachable states per database construction. This aligns with the main motivation behind this work, which is to mitigate the probability information loss due to quantum measurement collapse. Because quantum computers are reducing the time spent on the computationally intensive portion of the algorithm (state collision/reachability checks), steps such as database construction will become a larger proportion of algorithm runtime, so it is advantageous to have high database-use-efficiency. We reserve additional runtime analysis for future work.
However, the shared
database approach can be slightly less oracle-call efficient (compared to unshared database), with
fewer reachable states are admitted per oracle call because repeated
identical solutions are discarded. This is shown in
Section~\ref{section:resultsmain} Fig.~\ref{fig:nodespeedsimulated}, and we discuss how important that efficiency loss is and how to mitigate it in the next section. 

\begin{figure}[h]
	\centering
	\includegraphics[width=.45\textwidth]{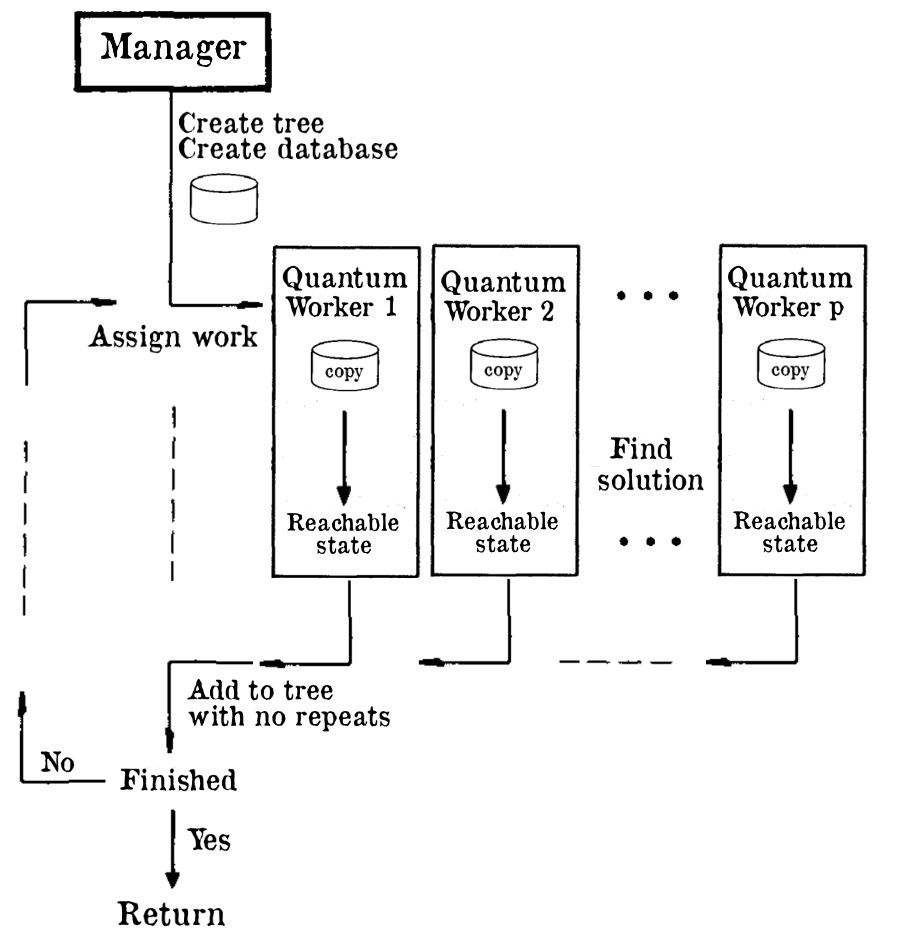}
        \caption{A graphical depiction of the shared database Quantum Parallel RRT algorithm. The manager (Alg.~\ref{alg:pqmanager}) creates a database and passes copies to the quantum workers (Alg~\ref{alg:pqworker}).}
	\label{fig:diagramshared}
\end{figure}

\begin{figure}[h]
	\centering
	\includegraphics[width=.45\textwidth]{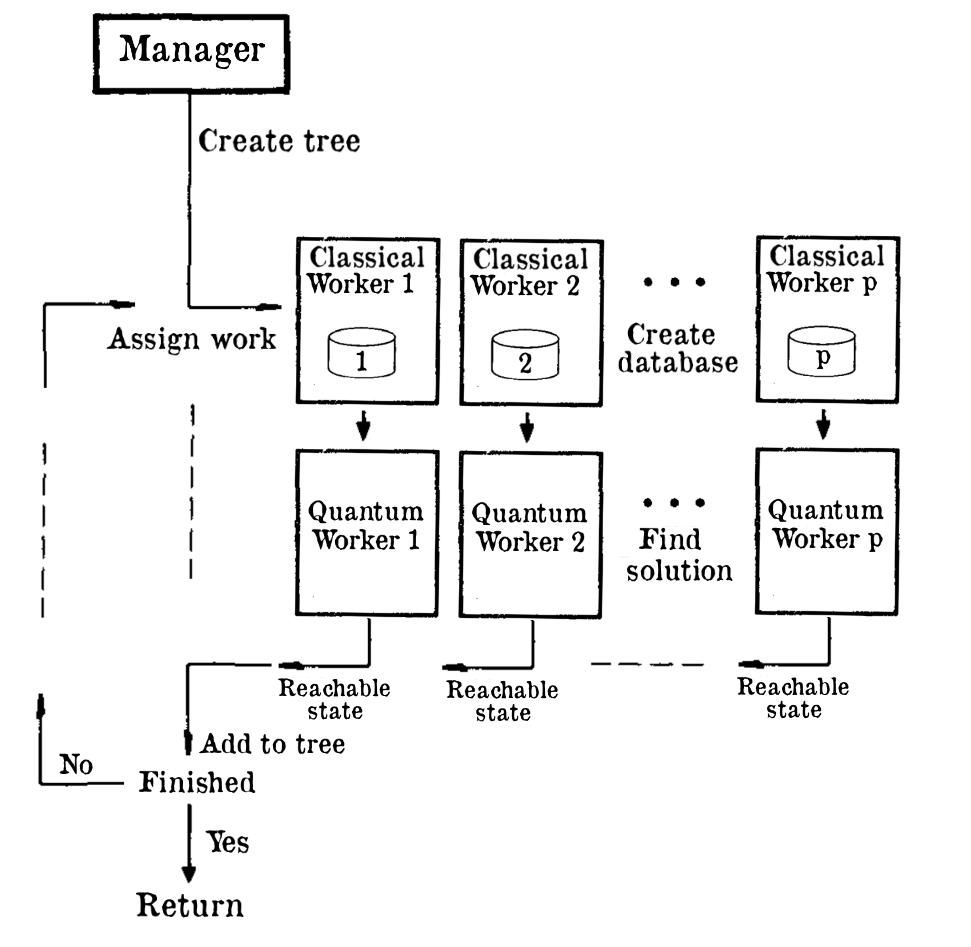}
	\caption{A graphical depiction of the unshared database Quantum Parallel RRT algorithm. The manager prompts $p$ classical workers to create $p$ different databases, which are passed to $p$ quantum workers to find solutions, which are returned to the manager.}
	\label{fig:diagramunshared}
\end{figure}
\subsection{Pq-RRT Probability Results}\label{section:pqRRTprob}
In this section, we characterize $p$
parallel quantum workers finding multiple solutions when Pq-RRT is
operating in a shared database setup. Since all
the workers are independently analyzing the same $2^n$-sized database
$D$, with $m$ oracle-marked solutions, in general multiple workers may
arrive at the same solution. This represents an efficiency loss to the shared database setup, so in what follows we characterize the worst and best case events. The worst case event is all workers arriving at the same solution, which has the runtime performance as non-parallel q-RRT but is $p$-times less oracle call efficient. The best case event is each worker finding a different solution, which has no runtime or oracle call efficiency loss, and ends with $p$ solutions. Only when $p\geq m$ can all solutions be
found in a single parallel pass. To build fast solutions, an understanding of the effects of choices of $p$ (and to some extent $m$) is necessary to maximize efficiency.

We assume that the database is optimally
amplified according to $\subscr{i}{max}$ applications of $Q$, where
$\subscr{i}{max}$ is given by,
\begin{equation}\label{eq:imax}
    \subscr{i}{max}=\frac{\pi}{4}\sqrt{2^n/m} \;;
\end{equation}
see~\cite{CF:03}. We
note that our connectivity analysis on estimating database correctness
in~\cite{PL-BB-SM:23} can be applied to the section at hand in order to attain optimal amplification (to maximize chances of measuring a solution). In what follows, let $G$ be the event that a good state, as defined by the oracle, is measured by a worker after $\subscr{i}{max}$ iterations of $Q$. 
\subsubsection{Without Oracle Errors}
\begin{lemma}\label{lemma:1}
 Let there be a parallel quantum process with $p$ workers and a shared
  $2^n$-sized database with $m$ solutions. The probability that all $p$ workers find the same solution
  is,
\begin{equation}\label{eq:allsame}
    \prob(\text{same solution}) = \prob(G)^p m^{1-p},
\end{equation}
where $\prob(G)$ is the total probability of event $G$,
\begin{equation}
  \prob(G) = \sin^2((2\subscr{i}{max}+1)\theta),
\end{equation}
and where $\theta$ is defined such that
$\sin^2(\theta) = \frac{m}{2^n}$.
\end{lemma}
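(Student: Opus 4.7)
The plan is to reduce the statement to a straightforward counting argument once the standard QAA probability is invoked, exploiting the symmetry of amplitude amplification over marked states and the independence of the parallel workers acting on identical copies of the database.

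First I would recall the standard QAA analysis (as cited from \cite{CF:03}): starting from the uniform superposition $\mathbf{W}\ket{0}^{\otimes n}$, the operator $Q(\mathcal{X})$ rotates the state in the two-dimensional subspace spanned by the ``good'' and ``bad'' components by angle $2\theta$ per application, where $\sin^2\theta = m/2^n$. After $\subscr{i}{max}$ iterations, the total amplitude on the good subspace has squared magnitude $\sin^2((2\subscr{i}{max}+1)\theta)$, which is precisely $\prob(G)$. This yields the second displayed formula directly from the setup, and does not require any new argument.

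Next I would establish the key symmetry: because the initial state is a uniform superposition and the oracle $\mathcal{X}$ treats all marked indices identically (assigning the same phase), the amplitude on each individual good basis state remains equal throughout the QAA iterations. Hence, conditioned on event $G$, the measured index is uniform over the $m$ marked elements, so the probability a single worker measures a specific solution $j$ equals $\prob(G)/m$. This is the step I expect to be the main conceptual point — the rest is bookkeeping — but it follows from the Grover/QAA invariance of the two-dimensional rotation subspace.

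Then I would use the fact that the $p$ workers operate on independent copies of the database and run independent quantum circuits, so their measurement outcomes are mutually independent. Thus the probability all $p$ workers measure the same specific good state $j$ is $(\prob(G)/m)^p$. Summing over the $m$ possible common solutions,
\begin{equation}
\prob(\text{same solution}) = \sum_{j=1}^{m} \left(\frac{\prob(G)}{m}\right)^{p} = m \cdot \frac{\prob(G)^p}{m^p} = \prob(G)^p\, m^{1-p},
\end{equation}
which is the claimed identity. If a cleaner justification of the uniform-over-marked-states symmetry is desired, I would include a brief remark that the Grover iterate $Q(\mathcal{X})$ commutes with any permutation of the marked indices (and likewise of the unmarked ones), so the post-amplification state is invariant under such permutations, forcing equal amplitudes within each class.
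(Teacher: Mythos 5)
Your proposal is correct and follows essentially the same route as the paper's proof: equal measurement probability $\prob(G)/m$ for each marked state after optimal amplification, independence of the $p$ workers, and multiplication of the particular-solution probability $\left(\prob(G)/m\right)^p$ by $m$ to cover any common solution. The only difference is that you explicitly justify the uniform-over-marked-states symmetry via the two-dimensional rotation subspace, a step the paper simply asserts with a citation to the QAA literature.
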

\begin{proof}
  To attain this result, we observe that after $\subscr{i}{max}$
  iterations of $Q$, all $m$ solutions have equal probability of
  measurement given by $\prob(G)/m$. The probability that all $p$
  workers measure a \emph{particular} solution $i$ is,
\begin{equation}
  \prob(\text{particular solution}) = \left(\frac{\prob(G)}{m}\right)^p,
\end{equation}
and this is multiplied by $m$ to generalize to finding \emph{any} same
solution, yielding Eq.~\eqref{eq:allsame}. The total good measurement probability and the definition of $\theta$ can be found at~\cite{GB-PH-MM-AT:02}.
\end{proof}

\begin{lemma} \label{lemma:2} For $m \geq p$ and $m,p \in \naturals$, the probability that all
  workers find different solutions is given by,
  \begin{equation}\label{eq:alldiff}
    \prob(\text{different solutions}) = \frac{\prob(G)^p m!}{m^p(m-p)!},
  \end{equation}
\end{lemma}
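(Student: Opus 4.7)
The plan is to mirror the structure of the proof of Lemma~\ref{lemma:1}, exploiting two ingredients that are already established: first, after $\subscr{i}{max}$ applications of $Q$, each of the $m$ marked solutions is measured with the same probability $\prob(G)/m$ by a single worker (symmetry of QAA across equivalently-marked basis states, as in~\cite{GB-PH-MM-AT:02}); second, the $p$ workers are acting on independent copies of the shared database and are therefore mutually independent in their measurement outcomes. With these two facts, the computation becomes a counting argument essentially analogous to a birthday-style calculation over $m$ equally likely outcomes.

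First I would fix an ordered tuple $(i_1,\dots,i_p)$ of distinct marked elements, one per worker. By independence and the per-solution probability $\prob(G)/m$, the joint probability that worker $k$ measures exactly $i_k$ for every $k\in\{1,\dots,p\}$ is
\begin{equation}
\left(\frac{\prob(G)}{m}\right)^p.
\end{equation}
Next I would sum over all such ordered tuples. Since $m \geq p$, the number of ways to assign $p$ distinct marked solutions to the $p$ ordered workers is the falling factorial $m(m-1)\cdots(m-p+1) = m!/(m-p)!$. Multiplying, one obtains
\begin{equation}
\prob(\text{different solutions}) = \frac{m!}{(m-p)!}\left(\frac{\prob(G)}{m}\right)^p = \frac{\prob(G)^p\, m!}{m^p (m-p)!},
\end{equation}
which is Eq.~\eqref{eq:alldiff}.

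The only subtle steps that warrant explicit justification are the uniformity claim (so that one may replace the marginal distribution of a worker's outcome conditional on a good measurement by a uniform distribution over the $m$ marked states) and the condition $m\geq p$ (otherwise the event in question is empty and the falling factorial vanishes, consistent with the formula). I would cite Lemma~\ref{lemma:1} and~\cite{GB-PH-MM-AT:02} for the former and simply note the latter as a boundary check. There is no serious obstacle here beyond making sure the independence and uniformity assumptions are stated cleanly; the rest is arithmetic.
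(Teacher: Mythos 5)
Your proof is correct and follows essentially the same route as the paper: the paper also counts the $m!/(m-p)!$ permutations of distinct solutions over the $m^p$ equally likely joint outcomes and scales by $\prob(G)^p$, relying implicitly on the same uniformity (each solution measured with probability $\prob(G)/m$, from Lemma~\ref{lemma:1}) and worker independence you state explicitly. Your version merely makes those two assumptions and the ordered-tuple summation more explicit, which is a fair presentational improvement but not a different argument.
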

\begin{proof}
  This result follows from $m$ permute $p$ (the number of possible
  ways $p$ objects can be selected, without replacement, from $m$
  possibilities) over the total number of possible outcomes
  $m^p$. This is scaled by the likelihood that all workers find a
  correct solution, $\prob(G)^p$, to yield Eq.~\eqref{eq:alldiff}.
\end{proof}

For the worst case scenario in Eq.~\ref{eq:allsame}, as $m \rightarrow \infty$, $\prob(\text{same solution})\rightarrow 0$. This makes intuitive sense: as the number of available solutions increases, the likelihood of all workers finding the same solution decreases as a function with power $1-p$, where $p\geq 2$, as it is only sensible to conjecture about the parallel behavior of $2$ or more workers.

For the best case scenario in Eq.~\ref{eq:alldiff}, as $m \rightarrow \infty$, $\prob(\text{different solutions}) \rightarrow \prob(G)^p$. This also makes intuitive sense: as the number of solutions increases, the likelihood of all workers finding different solutions approaches the total likelihood of all workers finding a solution. We note that for $m \geq p$ and $m,p \in \mathbb{N}$:
\begin{equation}
    \lim_{m\rightarrow \infty} \frac{m!}{m^p(m-p)!} = 1.
\end{equation}
The number of solutions $m$ should be as large as practicable to reduce efficiency loss through oracle overlap.

\begin{lemma}\label{lemma:3}
For $p \geq m$, the expected number of workers $p$, to find all $m$ solutions within $D$ in one pass, is given by,
    \begin{equation}
      E(p) = \frac{m H_m}{\prob(G)},
    \end{equation}
    where $H_m$ is the $m$\textsuperscript{th} harmonic number. Equivalently, $\frac{E(p)}{p_2}$ also describes the expected number of passes at a single database that the set number of workers $p_2$ must make to find all solutions.
\end{lemma}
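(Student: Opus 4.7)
The plan is to recognize this as a variant of the classical coupon collector problem in which each trial can also fail outright, and then apply the standard decomposition of the collection time into a sum of geometric waiting times. The crucial input from earlier in the paper is the observation used in the proof of Lemma~\ref{lemma:1}: after $\subscr{i}{max}$ iterations of $Q$, each of the $m$ solutions has equal measurement probability $\prob(G)/m$, and a worker returns no valid solution with probability $1-\prob(G)$. Thus, a single worker can be modeled as a random variable that, independently of the others, returns solution $i \in \{1,\dots,m\}$ with probability $\prob(G)/m$ and returns nothing with probability $1 - \prob(G)$.

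First I would let $N$ denote the (random) number of workers required until all $m$ solutions have been observed at least once, and decompose
\begin{equation}
    N = \sum_{i=1}^{m} T_i,
\end{equation}
where $T_i$ is the number of additional workers needed to see a brand-new solution after exactly $i-1$ distinct solutions have already been collected. Conditional on the previous history, a single worker produces a new (previously unseen) solution with probability $\prob(G)\cdot\frac{m-i+1}{m}$, since they must both measure a good state and land on one of the $m-i+1$ uncollected items. Hence $T_i$ is geometric with this success probability and
\begin{equation}
    E[T_i] = \frac{m}{\prob(G)(m-i+1)}.
\end{equation}

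Summing over $i=1,\dots,m$ and reindexing $j = m - i + 1$ yields
\begin{equation}
    E[N] = \sum_{i=1}^{m} \frac{m}{\prob(G)(m-i+1)} = \frac{m}{\prob(G)}\sum_{j=1}^{m}\frac{1}{j} = \frac{m H_m}{\prob(G)},
\end{equation}
which is the claimed formula. The second statement of the lemma then follows immediately: if instead of a single pass with $N$ workers one uses a fixed parallel pool of size $p_2 < E(p)$, independence across passes gives an expected number of passes equal to $E(p)/p_2$ by linearity (equivalently by a Wald-style argument, since each pass contributes $p_2$ i.i.d.\ worker trials drawn from the same distribution).

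The main obstacle is conceptual rather than computational: one must justify treating the workers as i.i.d.\ uniform draws over the solution set (augmented with a ``null'' outcome), which is exactly the symmetry observation already established in Lemma~\ref{lemma:1}. Once that is in place, the calculation is the standard coupon-collector argument, rescaled by the per-worker success probability $\prob(G)$. A minor care-point is the condition $p \geq m$ in the statement: it should be read as ``the pool is allowed to be at least as large as $m$,'' since $E(p) \geq m$ automatically, ensuring the event ``all $m$ solutions found'' is achievable with positive probability in one pass.
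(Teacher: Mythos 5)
Your proposal is correct and follows essentially the same route as the paper, which proves the lemma by invoking the Coupon Collector's Problem (solutions as coupons, workers as trials) and scaling the classical expectation $mH_m$ by $1/\prob(G)$ to account for workers that fail to measure any good state. Your explicit decomposition into geometric waiting times $T_i$ with success probability $\prob(G)\frac{m-i+1}{m}$ simply makes rigorous the standard argument the paper cites informally, so the two proofs coincide in substance.
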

\begin{proof}
We reach this result from the application of the Coupon Collector's
problem~\cite{PN:08}, with a minor modification, to the independent
quantum computing worker processes. Briefly, the coupon collector's problem concerns questions about the ``collect all coupons from cereal boxes to win'' contest. In this context, solutions in the database represent coupons (to be found with a certain probability), and number of workers represents (the expectation of) how many cereal boxes must be opened to find one of each solution/coupon. The result takes into account that workers may return the same solution. This application is scaled by the total probability of correct solutions, $\prob(G)$, to account for the proportion of the time when a good solution is not measured.
\end{proof}
Lemma~\ref{lemma:3} allows a parallel (and repeated) architecture to be chosen based upon knowledge of $m$, such as from our connectivity analysis on estimating database correctness in~\cite{PL-BB-SM:23}. Additionally, this leads to the database construction tool described in the following section that allows the proportion $m/2^n$ to be made larger or smaller. We remark that in general, it is possible to calculate the probability of $n$ workers coinciding on the same exact solution, as it relates to the multinomial distribution. 

\subsubsection{With Oracle Errors}
We also consider the case where the oracle is making repeated false
positive and false negative errors. Let the probability that a state
is marked by the oracle incorrectly as good be given by $q\in[0,1]$
(false positive), and let the probability that a state is marked by
the oracle incorrectly as bad be given by $v\in[0,1]$ (false
negative). Let the number of ground-truth solutions in the database
tagged by the oracle as a solution be $m_1\leq m$. Let the number of
actual ground truth solutions in the database, which were mistakenly
tagged by the oracle as bad be $m_2$, such that,
\begin{equation}
    q = \frac{1-m_1}{m},\;v = \frac{m_2}{2^n-m},
\end{equation}
as shown in Fig.~\ref{fig:databaseproportions}. 

First, we derive the likelihood that a single worker finds a real solution.
Let $G^*$ be the event that a real, ground truth solution (not according to the oracle) is measured for addition to the tree after optimal amplification with QAA. 
\begin{lemma}
    The total probability of measuring real, ground truth solutions is,
    \begin{equation}
        \prob(G^*) = \frac{m_1}{m}\prob(G)+\frac{m_2}{2^n-m}(1-\prob(G)).
    \end{equation}
\end{lemma}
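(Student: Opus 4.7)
The plan is to decompose the measurement event by whether the measured state is flagged by the (faulty) oracle or not, and then within each flagged class, count the number of ground-truth solutions present. The definitions of $q$ and $v$ partition the $2^n$-sized database into four disjoint groups: oracle-tagged true positives (size $m_1$), oracle-tagged false positives (size $m - m_1$), oracle-untagged false negatives (size $m_2$), and oracle-untagged true negatives (size $2^n - m - m_2$). The event $G^*$ is the disjoint union of ``measured a state in the $m_1$ group'' and ``measured a state in the $m_2$ group,'' so I would compute these two probabilities separately and add them.

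First, I would recall the key symmetry of QAA with a Boolean oracle: the amplification operator $Q(\mathcal{X})$ acts identically on all states flagged by $\mathcal{X}$ and identically on all unflagged states, because the oracle reflection only distinguishes the two classes and the diffusion operator is invariant under permutations of the computational basis. Consequently, after $\subscr{i}{max}$ applications of $Q$, every oracle-tagged state carries equal measurement probability and every oracle-untagged state carries equal (smaller) measurement probability. Since by Lemma~\ref{lemma:1} the total mass on the oracle-tagged set is $\prob(G) = \sin^2((2\subscr{i}{max}+1)\theta)$, each individual tagged state contributes $\prob(G)/m$ and each individual untagged state contributes $(1-\prob(G))/(2^n - m)$.

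Next, I would apply the law of total probability. The contribution from real solutions inside the oracle-tagged class is $m_1 \cdot \prob(G)/m = (m_1/m)\prob(G)$, and the contribution from real solutions inside the oracle-untagged class (false negatives that nonetheless might be measured) is $m_2 \cdot (1 - \prob(G))/(2^n - m)$. Summing the two disjoint contributions yields exactly the claimed expression for $\prob(G^*)$. A sentence reconciling the notation $q = (m - m_1)/m$ and $v = m_2/(2^n - m)$ from the preceding paragraph would confirm consistency but is not needed for the identity itself.

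The only real subtlety — and the main obstacle worth flagging — is justifying the uniform amplification within each oracle class, since $G^*$ is defined relative to ground truth while QAA is blind to ground truth and acts only according to $\mathcal{X}$. Once that invariance is explicitly stated (or cited from~\cite{GB-PH-MM-AT:02}), the rest is a one-line counting argument; no additional hypotheses on $m_1$, $m_2$, $q$, or $v$ beyond those already given are required for the equality to hold.
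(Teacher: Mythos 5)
Your proposal is correct and follows essentially the same route as the paper's proof, which likewise adds the probability $\frac{m_1}{m}\prob(G)$ of measuring a correctly tagged good solution to the probability $\frac{m_2}{2^n-m}(1-\prob(G))$ of measuring an incorrectly tagged bad solution. Your explicit justification that QAA distributes measurement probability uniformly within the oracle-tagged and oracle-untagged classes (since the oracle and diffusion operators cannot distinguish states within a class) is left implicit in the paper, resting on the observation already made in the proof of Lemma~\ref{lemma:1}, so you have simply made the paper's argument more self-contained.
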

\begin{proof}
We attain this result by adding the probability of measuring a correctly tagged good solution, $\frac{m_1}{m}\prob(G)$ to the probability of measuring an incorrectly tagged bad solution, $\frac{m_2}{2^n-m}(1-\prob(G))$.
\end{proof}

\begin{figure}[h]
	\centering
	\includegraphics[width=.47\textwidth]{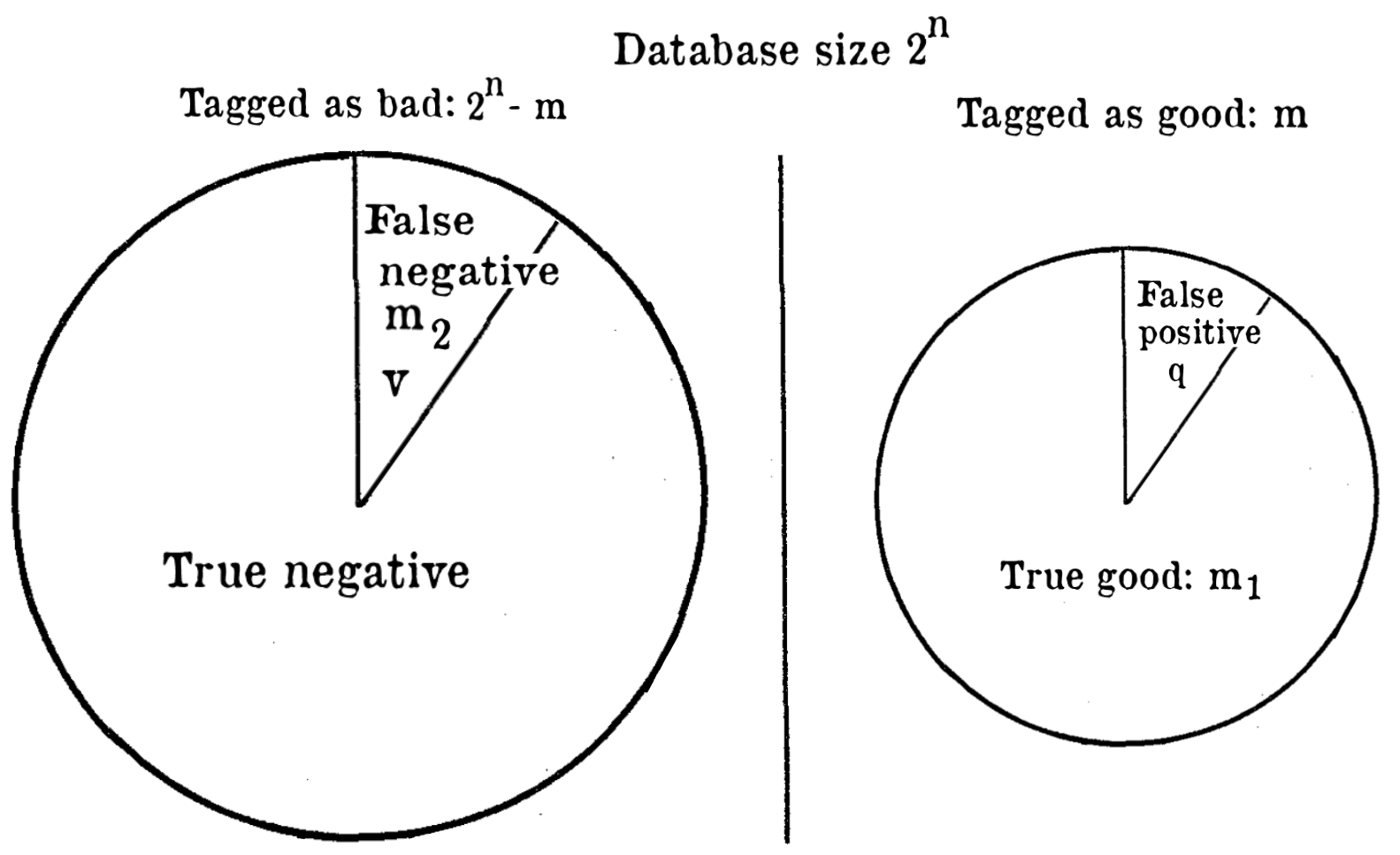}
	\caption{A graphical depiction of the false positive and false negative regions of a database with good and bad tags by an oracle. Of the $m$ good tags, $m_1$ are true good tags, with a false positive probability of $q$. Of the $2^n-m$ bad tags, $m_2$ are actually good elements, with a false negative probability of $v$. After optimal QAA, the probability of a tagged-as-good state being measured is $\prob(G)$.}
	\label{fig:databaseproportions}
\end{figure}

The following lemma is a modification of Lemma~\ref{lemma:1} to include oracle errors.
\begin{lemma}\label{lemma:5}
    The probability that all $p$ workers find the same ground truth solution, adjusted for oracle mistakes, is,
    \begin{equation}\label{eq:allsamemod}
        \prob(\text{same solution}) = m_1\left(\frac{\prob(G)}{m}\right)^p + m_2\left(\frac{1-\prob(G)}{2^n-m}\right)^p.
    \end{equation}
\end{lemma}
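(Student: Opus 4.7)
The plan is to decompose the event ``all $p$ workers measure the same ground-truth solution'' into two disjoint sub-events according to how the oracle labels the shared state: either it is correctly tagged as good (one of the $m_1$ true positives among the $m$ tagged-good entries), or it is a false negative (one of the $m_2$ ground-truth solutions hiding among the $2^n-m$ tagged-bad entries). This partition is forced by the fact that, after optimal amplification, the measurement distribution is uniform \emph{within} each oracle class but not across the two classes, so collapsing these two cases separately is the natural way to extend the argument from Lemma~\ref{lemma:1}.

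First I would recall the post-amplification distribution already used in Lemma~\ref{lemma:1}: after $\subscr{i}{max}$ iterations of $Q$ with $\subscr{i}{max}$ as in Eq.~\eqref{eq:imax}, every tagged-good state is measured with probability $\prob(G)/m$, and, by conservation of total probability together with the symmetry of $Q$ across states sharing a common oracle label, every tagged-bad state is measured with probability $(1-\prob(G))/(2^n-m)$. This is the same mechanism invoked in Lemma~\ref{lemma:1}, only now I am tracking what happens in both oracle classes simultaneously because ground-truth solutions live in both.

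Next I would compute each contribution. By independence of the $p$ workers, the probability that they all measure one \emph{particular} tagged-good state is $(\prob(G)/m)^p$, and summing over the $m_1$ true positives contributes $m_1(\prob(G)/m)^p$. Symmetrically, the probability that all workers measure a particular tagged-bad state is $((1-\prob(G))/(2^n-m))^p$, and summing over the $m_2$ false negatives contributes $m_2((1-\prob(G))/(2^n-m))^p$. Since the two sub-events are disjoint (a single state cannot carry both oracle labels), adding them yields Eq.~\eqref{eq:allsamemod}.

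The main obstacle is not computational but conceptual: one must be careful to add over ground-truth solutions rather than over all measured states, and to recognize that the uniform-per-class structure of the amplified distribution is exactly what makes the two sums collapse to the clean counts $m_1$ and $m_2$. Once that observation is made explicit, the proof reduces to the same particular-state-then-scale argument used in Lemma~\ref{lemma:1}, applied twice.
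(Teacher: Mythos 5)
Your proof is correct and follows essentially the same route as the paper: compute the probability that all $p$ independent workers measure a \emph{particular} correctly tagged good state, $\left(\frac{\prob(G)}{m}\right)^p$, and a \emph{particular} incorrectly tagged bad state, $\left(\frac{1-\prob(G)}{2^n-m}\right)^p$, scale these by $m_1$ and $m_2$ respectively, and add the two disjoint contributions. Your explicit remarks on the uniform-within-class structure of the amplified distribution and the disjointness of the two oracle classes are justifications the paper leaves implicit, but the argument is the same.
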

\begin{proof}
This result follows from the addition of the probability of all $p$ workers measuring a \emph{particular} correctly tagged good solution, $\left(\frac{\prob(G)}{m}\right)^p$, and the probability of all $p$ workers measuring a \emph{particular} incorrectly tagged bad solution, $\left(\frac{1-\prob(G)}{2^n-m}\right)^p$. Each of these is multiplied by $m_1$ and $m_2$ respectively, to consider \emph{any} real solution, then added together to yield Eq.~\eqref{eq:allsamemod}.
\end{proof}
For the worst case scenario in Lemma~\ref{lemma:5}, again, as $m \rightarrow \infty$, $\prob(\text{same solution}) \rightarrow 0$. 
The following lemma is a modification of Lemma~\ref{lemma:3}, adjusted for oracle false positive errors.
\begin{lemma}\label{lemma:6}
    The expected number of workers $p$ to find all $m_1$ ground truth solutions, when false positive oracle errors are taken into account, is given by,
    \begin{equation}
        E(p^*) = \frac{m_1H_{m_1}}{\frac{m_1}{m}\prob(G)},
    \end{equation}
    where $H_{m_1}$ is the $m_1$\textsuperscript{th} harmonic number.
\end{lemma}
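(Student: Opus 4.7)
The plan is to reuse the Coupon Collector reasoning from Lemma~\ref{lemma:3}, but applied to a smaller ``coupon set'' and with a reduced per-draw success probability reflecting the fraction of oracle-tagged good states that are actually true solutions. First I would identify the coupons: since the lemma asks only for the $m_1$ ground-truth solutions that are correctly tagged as good by the oracle (the false-negative set of size $m_2$ is effectively invisible to QAA, as those states are not amplified), the coupon collector target size is $m_1$, giving an expected $m_1 H_{m_1}$ successful draws to collect them all.

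Next I would compute the per-worker success probability, i.e.\ the probability that a single worker returns one of the $m_1$ correctly-tagged solutions after optimal amplification. By the amplification analysis referenced in Lemma~\ref{lemma:1}, after $\subscr{i}{max}$ applications of $Q$ each tagged-as-good element is measured with equal probability $\prob(G)/m$. Since $m_1$ of those $m$ tagged elements are ground-truth solutions, the per-worker success probability is
\begin{equation}
    \frac{m_1}{m}\prob(G).
\end{equation}

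Finally, I would combine these two pieces in the same way Lemma~\ref{lemma:3} scales its coupon collector bound by $\prob(G)$: dividing the expected number of successful draws $m_1 H_{m_1}$ by the per-worker success probability $\frac{m_1}{m}\prob(G)$ yields
\begin{equation}
    E(p^*) \;=\; \frac{m_1 H_{m_1}}{\tfrac{m_1}{m}\prob(G)},
\end{equation}
matching the stated expression.

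The main subtle point — and the only place I would need to be careful — is justifying that the $m_2$ false-negative states should be excluded from the coupon set. The intuitive argument is that QAA amplifies amplitude on the oracle-tagged subset only; a false-negative state is measured with vanishing probability $(1-\prob(G))/(2^n-m)$ and cannot be systematically collected the way coupon collector requires. Thus the ``findable'' ground-truth set has size $m_1$, not $m_1+m_2$, and the uniformity of measurement across the $m$ tagged-as-good states (inherited from the symmetric action of $Q$) is what permits the clean multiplicative splitting $\prob(G) \mapsto \tfrac{m_1}{m}\prob(G)$ needed to apply the coupon collector formula directly.
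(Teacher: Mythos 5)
Your proposal is correct and follows essentially the same route as the paper's proof, which likewise applies the Coupon Collector's Problem from Lemma~\ref{lemma:3} with coupon set size $m_1$ and scales the expectation by the per-worker success probability $\frac{m_1}{m}\prob(G)$. Your added justifications --- the uniformity of measurement across the $m$ tagged states and the exclusion of the $m_2$ false-negative states from the coupon set --- are details the paper leaves implicit, but they do not change the argument.
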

\begin{proof}
This result follows similarly to Lemma~\ref{lemma:3}, with $m_1$ rather than $m$, and where the expectation from the Coupon Collector's Problem is scaled by the proportion of the time that a ground truth good solution is found, $\frac{m_1}{m}\prob(G)$. 
\end{proof}
\begin{remark}
Oracle false negative errors, when considered in Lemma~\ref{lemma:6}, transform the problem into a simple version of the Weighted Coupon Collector's Problem~\cite{PB-TS:09} (also known as McDonald's Monopoly), with some ``rare coupons'' to find (represented by the  $m_2$ false negative solutions), and some ``common coupons'' (represented by the $m_1$ real positive solutions).
In reality, after optimal amplification, the probability of measuring the $m_2$ good solutions incorrectly tagged as bad, $\frac{m_2}{2^n-m}(1-\prob(G))$, is small, and additionally we care more about finding correctly identified good solutions than determining incorrectly tagged bad solutions from a database.
\end{remark}

\subsection{Quantum Database Annealing}\label{section:annealing}
In this section, we define the Quantum Database Annealing (QDA) strategy, shown in Alg.~\ref{alg:QDA}. QDA builds databases with elements constrained to a certain distance from the parent node, as defined by a temperature matrix $H$ and iterator $h$. The QDA strategy is an alternative to standard database construction and is inspired by the optimization technique simulated annealing and our investigations into oracle call constraints in~\cite{PL-BB-SM:23}. It represents a possible way to guide database construction to achieve a particular algorithmic goal, such as approximately selecting $m$ (with regard to the previous section), or in this case, initial fast expansion via spread node placement followed by increasing density through closer node placement. 

In a broad sense, sampling strategies for motion planning have been explored since the beginning, with strategies such as medial axis sampling, boundary sampling, Gaussian (obstacle) sampling, goal biasing, and hybrid schemes~\cite{ME-MS:14}. QDA is distinct from current classical computing approaches because the initial goal in this quantum formulation is to make sample connections less likely. When paired with a large database, QDA exploits quantum computing's ability to quickly find unlikely solutions, resulting in a motion planner that can explore very quickly when measured on oracle calls.

QDA first samples according to a uniform distribution over the configuration space $C$. The nearest (Euclidean distance) existing node to the sample is chosen as the parent. Next, the resulting parent-child relationship is constrained to be within a ball of radius $H(h)$ (with iterator $h$) while maintaining child-sample direction. The resultant pair is added to the database. An alternative is to sample initially over a disc or boundary at a distance constrained by $H(h)$.

In the beginning, with high temperature (when $H(h)$ is large), QDA will build a large database of further away and therefore less likely solutions. This allows further reachable solutions to be found quickly as compared to q-RRT and RRT. As the path planning problem continues (as $h$ increases), the temperature ($H(h)$) may drop to account for the addition of new nodes and to increase the ratio of good solutions in the database. The database size $2^n$ may also drop throughout the problem to increase efficiency, as when there are more solutions, smaller databases function as well as larger. When no additional information is known to guide sampling region, an alternative but similar sampling method in very large bounded configuration spaces is to build extremely large databases of unlikely solutions in an attempt to span long obstacle free channels quickly.

\begin{algorithm}
\caption{$\mathbf{4}$ q-RRT with Quantum Database Annealing}
\begin{algorithmic}[1] \label{alg:QDA}
\renewcommand{\algorithmicrequire}{\textbf{Input:}}
\renewcommand{\algorithmicensure}{\textbf{Output:}}
\REQUIRE $x_0,\subscr{x}{G}\;n,\;\text{oracle } \mathcal{X}$
\ENSURE Path $\gamma$
\STATE Init tree $T$ with root at $x_0$
\STATE \underline{Define temperature array $H$, index $h = 0$}\label{line:alg4:temparray}
\WHILE{$\subscr{x}{G}\notin T$}
\FOR{$i = 1 \text{ to } 2^n$} 
\STATE $t =$ random point
\STATE $P =$ closest parent of $t$ in $T$
\STATE \underline{Constrain $t$ to disc of dist. $H(h)$ from $P$} \label{line:alg4:tempconstraint}
\STATE $D(i) = [t;\:P]$
\ENDFOR 
\STATE Enumerate $D$ via $F:\{0,1\}^n \; \rightarrow D$ 
\STATE Init $n$ qubit register $\ket{z} \gets \ket{0}^{\otimes n}$
\STATE $\ket{\Psi} \gets \mathbf{W}\ket{z}$
\FOR{$i = 1 \text{ to } 2$} 
\STATE $\ket{\Psi} \gets Q(\mathcal{X})\ket{\Psi}$
\ENDFOR 
\STATE $[\subscr{x}{add},P] \gets F($measure$(\ket{\Psi}))$
\IF{$\|\subscr{x}{add}-\subscr{x}{G}\|<\delta$}
\STATE $\subscr{x}{add} = \subscr{x}{G}$
\ENDIF
\STATE Add $[\subscr{x}{add},P]$ to $T$
\STATE \underline{$h++$} \label{line:alg4:iterator}
\ENDWHILE
\STATE Return path $\gamma$ from $T$

\end{algorithmic}
\end{algorithm}

The differences between q-RRT and q-RRT with QDA, Alg.~\ref{alg:QDA}, lie in the latter algorithm's lines~\ref{line:alg4:temparray},~\ref{line:alg4:tempconstraint}, and~\ref{line:alg4:iterator}. Alg.~\ref{alg:QDA} line~\ref{line:alg4:temparray} is where the temperature array $H$ is defined and iterator $h$ initialized. On Alg.~\ref{alg:QDA} line~\ref{line:alg4:tempconstraint}, the temperature constraint is carried out by modifying the random point $t$ with respect to $P$, the closest parent of $t$ in $T$. On Alg.~\ref{alg:QDA} line~\ref{line:alg4:iterator}, the iterator $h$ is incremented to allow different temperatures on future database constructions. In the defined formalism, the database size $2^n$ is set and not decreased.

\section{Results and Discussion}
In this section, we show tree creation comparison results within two dimensional obstacle environments for Pq-RRT, q-RRT, Parallel RRT, and RRT. Direct comparisons highlighting the simulated quadratic runtime advantage of q-RRT over standard RRT are shown in detail in our previous work at~\cite{PL-BB-SM:23}.
Unless otherwise stated, results are presented comparing algorithm performance for solving the same problem in the same randomized obstacle environments. Both Pq-RRT and Parallel RRT are implemented with eight cores (workers). Both quantum algorithms use databases of size $2^8$, and the classical versions of the algorithms (RRT and Parallel RRT) replace the database construction and quantum search process with single reachability tests. The specific version of Parallel RRT is a Manager-Worker formulation (outlined in~\cite{DD-TS-JC:13} under Manager-Worker RRT), where a manager processes the tree and assigns single-node expansion work to workers, as expansion is the computationally expensive part of planning. 

All path planning simulations are run with Matlab v2022b on an eight core MacBook Pro with M2 chip. Quantum states and algorithms are simulated with the Matlab Quantum Computing Functions library~\cite{CF:03}.
All algorithms use the following arbitrary dynamics and reference tracking controller to test reachability for node admittance to the tree,
\begin{equation*} 
  x(t+1) = 
  Ax(t)+Bu(t)\:,
  \: x(0) = \subscr{x}{parent},
\end{equation*}
\begin{equation*}
  A = \begin{bmatrix}-1.5&-2\\
    1&3\end{bmatrix}, \,
  B= \begin{bmatrix}0.5&0.25\\0&1\end{bmatrix},
\end{equation*}
\begin{equation*}
    u(t) = -Kx(t),
\end{equation*}
\begin{equation*}
   K = \begin{bmatrix}1.9&-7.5\\
    1&7\end{bmatrix}.
\end{equation*}
The constant gain matrix $K$ can be any matrix such that the closed loop system is stable. 

\subsection{Node Placement and Oracle Calls}\label{section:resultsmain}
In this section, we highlight the advantages and disadvantages of q-RRT and Pq-RRT over RRT and Parallel RRT in being able to add nodes to the tree. We show performance compared to runtime in seconds, which we call wall-clock time to highlight that this is the ``real" runtime of the simulations, and then to number of oracle calls, which functions as the projected runtime improvement if algorithms are run on quantum devices. For performance metrics, we consider two quantities: number of oracle calls and number of nodes. Number of oracle calls is the metric for comparing how much Parallel q-RRT is able to speed up computation as compared to q-RRT. For all other comparisons, number of nodes is the chosen metric, as it signifies on a functional level the ability of each algorithm to search for a solution. Each point in Fig.~\ref{fig:oraclespeed}-~\ref{fig:nodespeedsimulated} represents one $30$-node tree creation, chosen to showcase average performance.

First, in Fig.~\ref{fig:oraclespeed} we show the wall-clock speed of Pq-RRT and q-RRT in being able to perform oracle calls to analyze the amount of computation speedup achieved. Next, we compare the wall clock speed of the two classical algorithms (in Fig.~\ref{fig:nodespeed}) and the two quantum algorithms (in Fig.~\ref{fig:nodespeedquantum}) to study the relative performance gain in parallelizing the quantum routines compared to classical parallel advantage. Lastly, in Fig.~\ref{fig:nodespeedsimulated}, we change to an oracle call (our quantum time surrogate) vs node creation comparison to show the efficiency of the quantum algorithms in admitting nodes to the tree. The performance advantage of Pq-RRT is inferred to combine the advantage shown over q-RRT in Fig.~\ref{fig:nodespeedquantum} and the advantage shown over the classical algorithms shown in Fig.~\ref{fig:nodespeedsimulated}.

\begin{figure}[h]
	\centering
	\includegraphics[width=.48\textwidth]{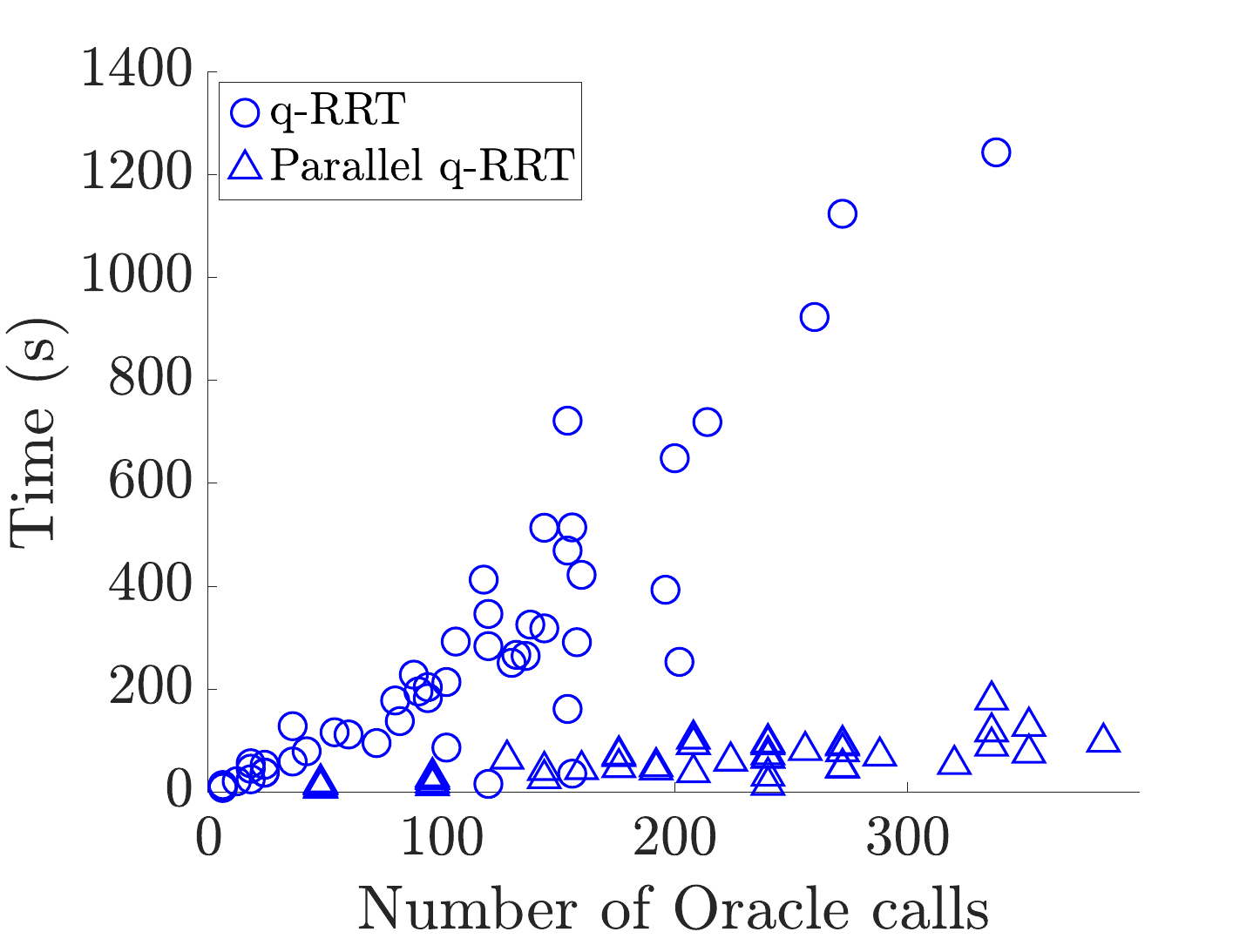}
	\caption{Comparison of the wall-clock speed (in seconds) of q-RRT, and shared database Pq-RRT in performing oracle calls or reachability tests.}
	\label{fig:oraclespeed}
\end{figure}
Figure~\ref{fig:oraclespeed} depicts the wall-clock speed of q-RRT and parallel q-RRT in performing oracle calls. Pq-RRT performs oracle calls in less time than q-RRT, as it more efficiently uses multiple workers to retrieve information from created databases. The relationship between oracle calls and time is approximately linear (as expected), and with a linear fit (using linear least squares), Pq-RRT shows a smaller slope ($0.20$) compared to q-RRT ($3.52$), with slope referring to seconds per oracle call (lower is more efficient). A classical computing shortcut is used which allows Pq-RRT to be $17.6$ times more efficient in performing work. In quantum computing simulation, a single created database is analyzed for reachability once, then amplified once, and each worker then measures a solution. This shortcut would not be possible on a quantum device, as qubits, once they are created, cannot be copied, so each worker would need to perform the reachability analysis separately. This would change the expected slope difference to be approximately $8$ times less than $17.6$ (as $8$ cores are used), for a total work (oracle call) efficiency gain of $2.2$.

\begin{figure}[h]
	\centering
	\includegraphics[width=.48\textwidth]{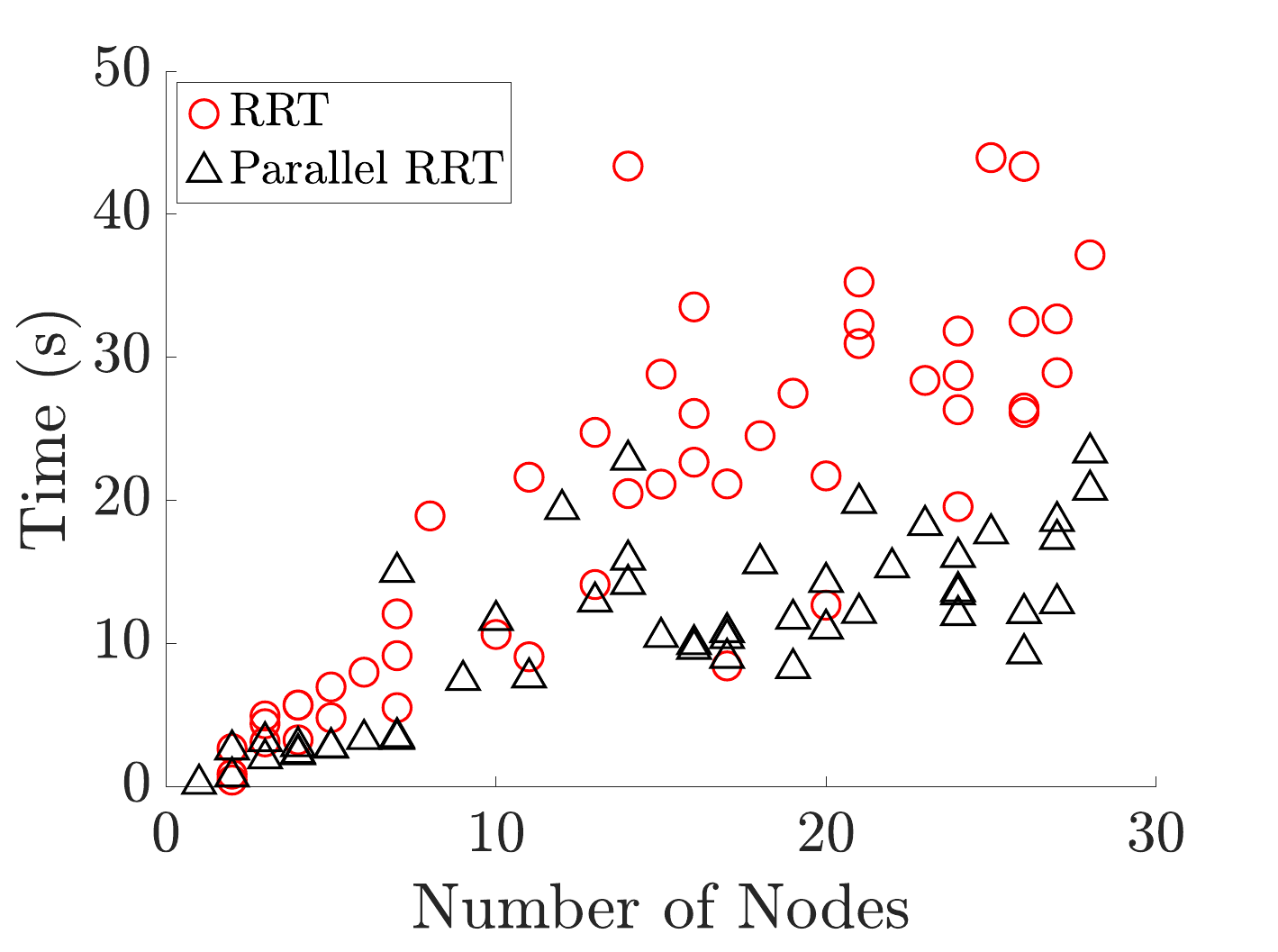}
	\caption{Comparison of the wall-clock speed (in seconds) of RRT and Parallel RRT in admitting reachable states to the tree.}
	\label{fig:nodespeed}
\end{figure}

\begin{figure}[h]
	\centering
	\includegraphics[width=.48\textwidth]{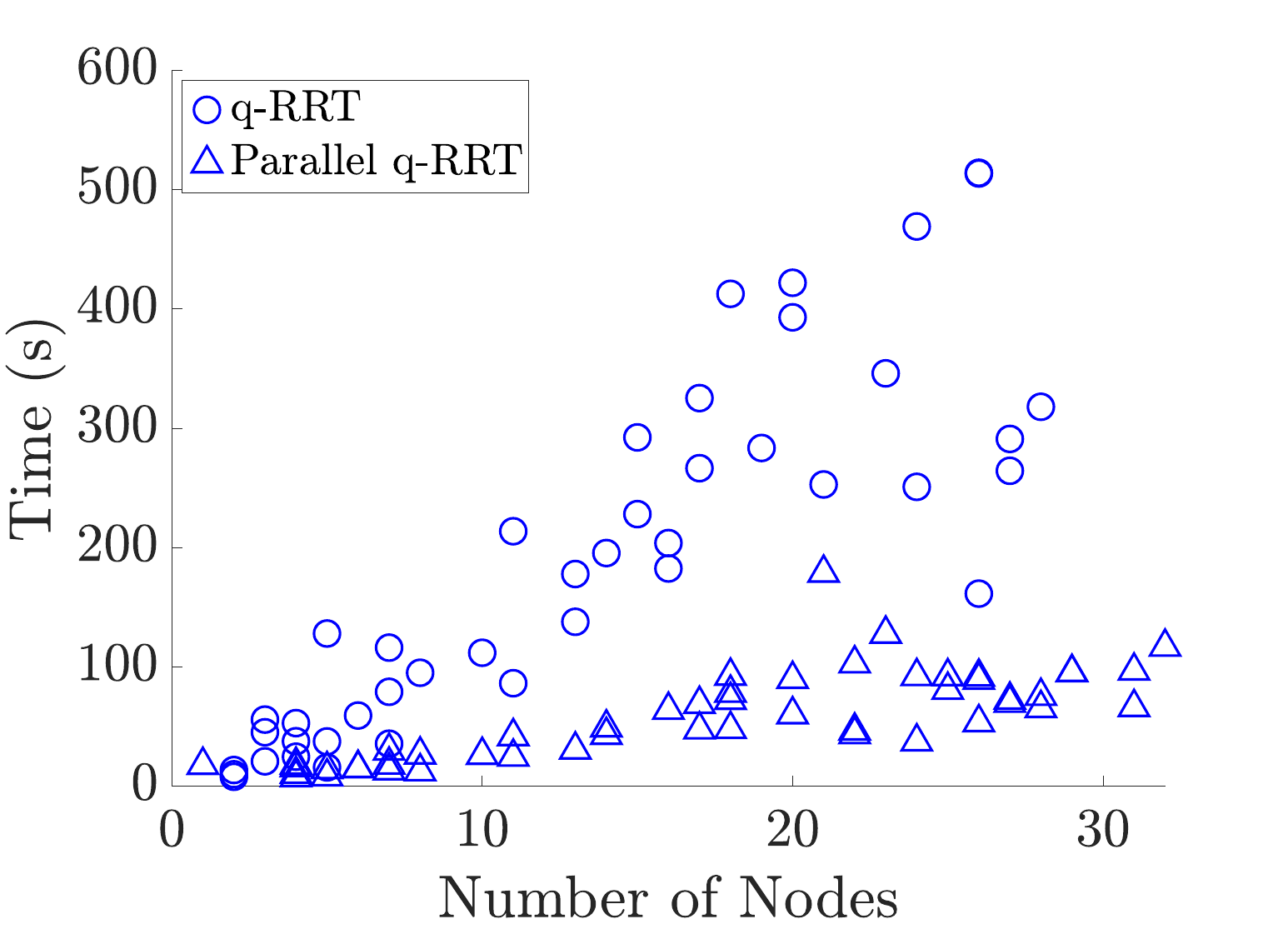}
	\caption{Comparison of the wall-clock speed (in seconds) of q-RRT and Pq-RRT in admitting reachable states to the tree.}
	\label{fig:nodespeedquantum}
\end{figure}

Figure~\ref{fig:nodespeed} depicts the wall-clock speed of RRT and Parallel RRT in admitting nodes to the graph, as opposed to performing oracle calls. The same data for q-RRT and Pq-RRT is shown in Fig.~\ref{fig:nodespeedquantum}. This comparison factors in differing node-admission oracle call efficiencies. The parallel versions of both algorithms, Parallel RRT and Pq-RRT, each are on average more time efficient than the non-parallel versions in admitting reachable states to the tree. 

The intuition behind slope in Figures~\ref{fig:nodespeed} and~\ref{fig:nodespeedquantum} is seconds per node, with lower numbers meaning more efficient.
Pq-RRT (slope $3.17$) in particular shows greater improvement over q-RRT (slope $25.3$) than Parallel RRT (slope $0.58$) over RRT (slope $1.23$), as is evidenced by a larger difference in slope ($8.0$-fold efficiency increase compared to $2.1$-fold), as calculated with a linear fit and linear least squares. 
The quantum algorithms, when measured by real time, lag behind both non-quantum RRT versions because they are not benchmarked on quantum computers (see the $y$-axis label differences between Fig.~\ref{fig:nodespeed} and Fig.~\ref{fig:nodespeedquantum}). The quantum computing simulations are performed via large arrays on classical devices. On a quantum device, we expect the run-time to be analogous to oracle calls, as discussed next.

\begin{figure}[h]
	\centering
	\includegraphics[width=.48\textwidth]{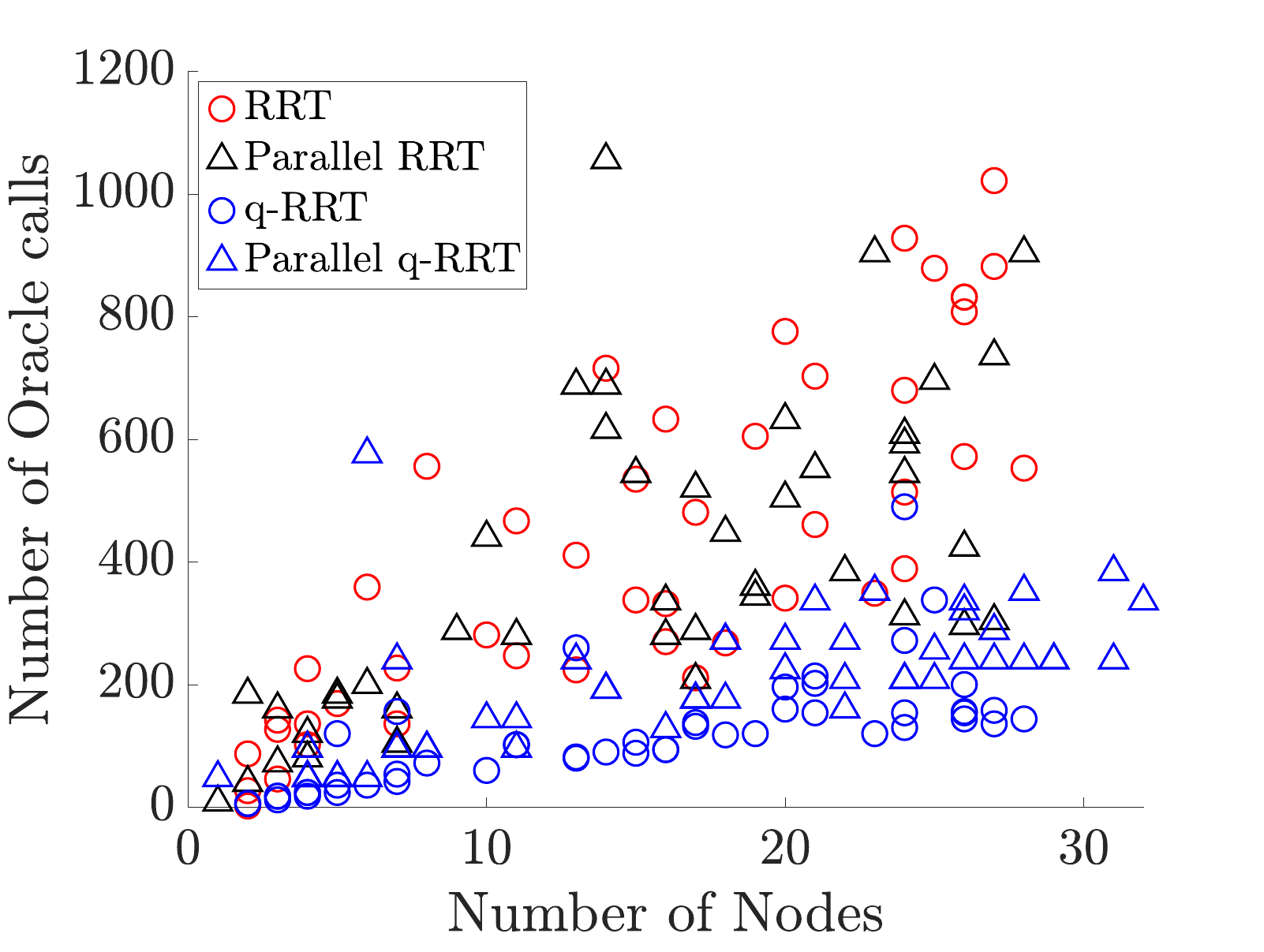}
	\caption{Comparison of the oracle call efficiency of RRT, Parallel RRT, q-RRT, and Pq-RRT in admitting reachable states to the tree.}
	\label{fig:nodespeedsimulated}
\end{figure}

Figure~\ref{fig:nodespeedsimulated} depicts the oracle call efficiency of all four algorithms in admitting reachable states to the tree as a function of the number of oracle calls it takes. This figure is analogous to expected run-time when the quantum algorithms are executed on a quantum device. Slopes are found with a linear fit using linear least squares and represent the number of oracle calls per node, with lower being more efficient. The efficiency advantage of q-RRT (slope $7.6$) and Pq-RRT (slope $10.7$) in admitting reachable states is shown over RRT (slope $21.5$) and Parallel RRT (slope $19.4$). The q-RRT algorithm is more efficient than Pq-RRT due to the fact that multiple workers can simultaneously return the same solution from a database (as explored in Props.~\ref{lemma:1} to~\ref{lemma:6}), and repeat solutions are discarded. However, Pq-RRT is capable of making simultaneous oracle calls with different workers, so for parallel vs not parallel time comparisons we refer the reader to Fig.~\ref{fig:nodespeed} and Fig.~\ref{fig:nodespeedquantum}.

The conclusions of the above analysis are the following: Pq-RRT is more time-efficient than q-RRT in performing work and placing nodes, Pq-RRT shows a greater time efficiency increase over q-RRT than Parallel RRT does over RRT, and q-RRT is slightly more oracle-call-efficient than Pq-RRT, but both quantum algorithms are more oracle-call-efficient than the classical algorithms. 

\subsection{Exploration Speed}\label{section:resultsspeed}
The results of this section are extensions to our results in~\cite{PL-BB-SM:23} between q-RRT and RRT, showing q-RRT's ability to explore quickly and in a generalized environment. We show a heat-map of state space nodes placed within a certain number of oracle calls. Oracle calls are chosen as a substitute to time because the quantum computer simulation performs slowly on classical devices. Actual runtime is expected to be analogous to the number of oracle calls, as reachability tests for the local planner consume the majority of the algorithm runtimes. Each algorithm is tested over $100$ trials. Each trial is cut off after a certain number of oracle calls to show each algorithm's speed of node placement. Let oracle efficiency be the ratio of total nodes placed over total oracle calls. In each figure, the red circle refers to a goal zone.

\begin{figure}[h]
	\centering
	\includegraphics[width=.4\textwidth]{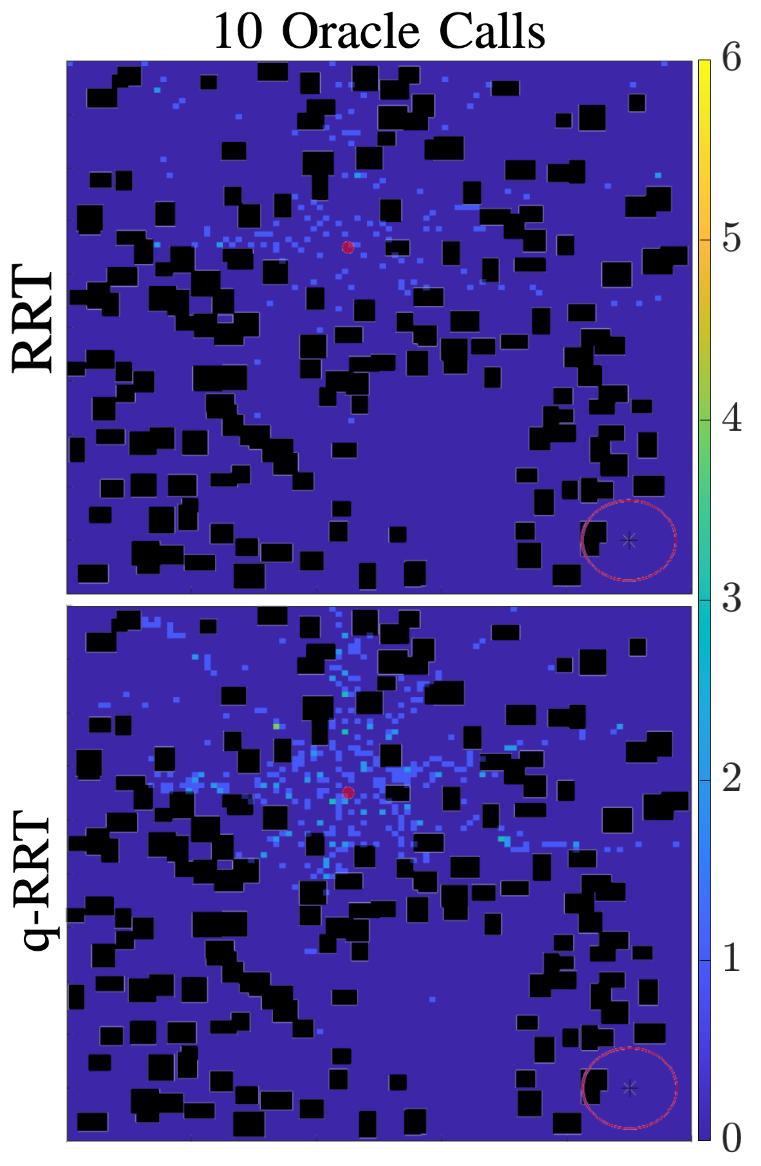}
	\caption{Comparison of initial exploration speeds (up to $10$ oracle calls) of RRT and q-RRT. Data is shown as a state space heat-map of node placements over $100$ trials of each algorithm in the shown obstacle environment. A goal zone is shown as a red ring in the bottom right, and the heatmap color key is shown on the right of the graph.}
	\label{fig:heatmap1}
\end{figure}
Figure~\ref{fig:heatmap1} depicts the initial exploration speeds, from $0$ to $10$ oracle calls, of RRT and q-RRT. Each path planning problem is cut off after $10$ oracle calls and a heatmap is created of the total node placement in the state space over $100$ trials. The q-RRT method shows much faster initial node placements over RRT, admitting $372$ nodes with an oracle efficiency of $31.2\%$. The q-RRT method has more than a thousand total oracle calls due to the inclusion of a finalizer line before nodes are admitted to the tree. RRT admitted $125$ nodes with an oracle efficiency of $12.5\%$. Node placement is more dense both in the initial node pocket and along lines exploring outward between obstacles away from the initial node.

\begin{figure}[h]
	\centering
	\includegraphics[width=.4\textwidth]{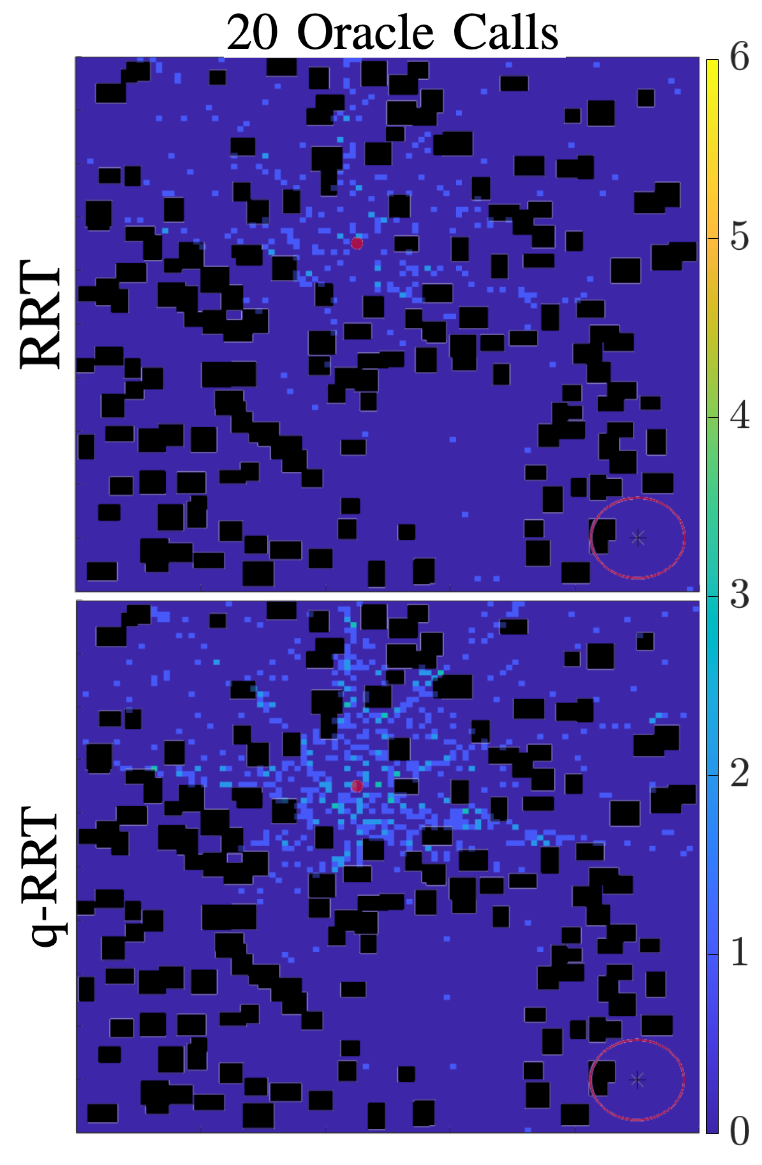}
	\caption{Comparison of middle-time exploration speeds (up to $20$ oracle calls) of RRT and q-RRT. Data is shown as a state space heat-map of node placements over $100$ trials of each algorithm in the shown obstacle environment.}
	\label{fig:heatmap2}
\end{figure}
Figure~\ref{fig:heatmap2} depicts the middle-time exploration speed, from $0$ to $20$ oracle calls, of RRT and q-RRT. Similarly, each path planning problem is cut off after $20$ oracle calls and a heatmap created from the total node placement of each algorithm over $100$ trials. The q-RRT method shows much faster and more full middle-time node placement, admitting $650$ nodes with an oracle efficiency of $31.0\%$. RRT admitted $231$ nodes with an oracle efficiency of $11.6\%$. Node placement is more ``full" in the initial pcket, and is much more dense along lines exploring out between obstacles from the initial node.

\begin{figure}[h]
	\centering
	\includegraphics[width=.4\textwidth]{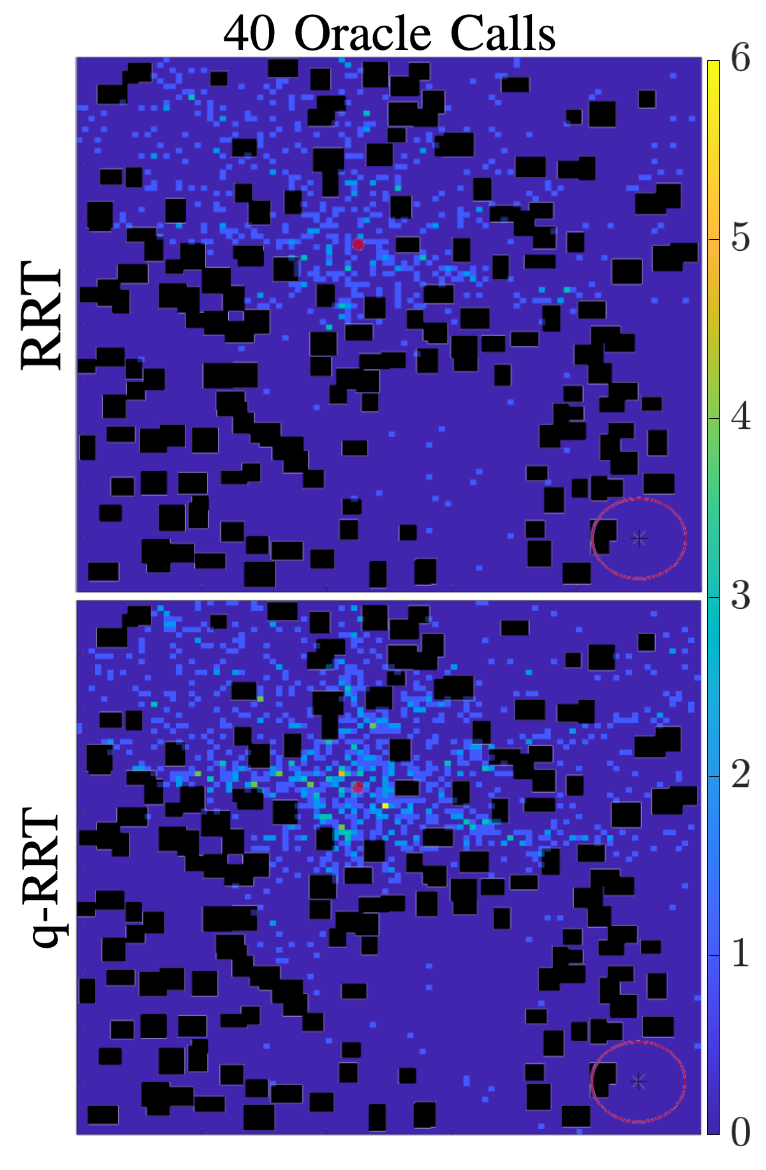}
	\caption{Comparison of late-time exploration speeds (up to $60$ oracle calls) of RRT and q-RRT. Data is shown as a state space heat-map of node placements over $100$ trials of each algorithm in the shown obstacle environment.}
	\label{fig:heatmap3}
\end{figure}
Figure~\ref{fig:heatmap3} depicts the ``late-time'' exploration speed, from $0$ to $40$ oracle calls, of RRT and q-RRT. Heatmap creation is again similar to previous. The q-RRT method has admitted more nodes in nearby pockets, and has a more dense spread of nodes in further away regions, admitting $1091$ nodes with an oracle efficiency of $26.0\%$, compared to RRT, which admitted $526$ nodes with an oracle efficiency of $13.2\%$. The average oracle efficiency of q-RRT has dropped somewhat compared to the initial and middle time exploration, and this is due to the fact that, as the existing tree grows, new random points are more likely to be reachable to the existing graph. This serves to allow RRT to catch up in terms of efficiency, and q-RRT's created database, on average, has allowed more solutions. The quantum version of the algorithms thrive (in comparison) in situations where there are few solutions. Heatmaps for Pq-RRT were also created for $10$, $20$, and $40$ oracle call cases as shown in Fig.~\ref{fig:heatmap1}-~\ref{fig:heatmap3}. The Pq-RRT heatmaps were omitted, as results were largely similar between q-RRT and Pq-RRT. This is similar to findings in Fig.~\ref{fig:nodespeedsimulated} that, when compared over oracle calls (quantum time surrogate), Pq-RRT's advantage is not apparent, as Pq-RRT is able to make simultaneous oracle calls. 

\subsection{Narrow Corridor Exploration}\label{section:resultsthinchannel}
The ability of motion planning algorithms to find paths through narrow corridor environments serves as a benchmark for the ability to find difficult solutions in narrow spaces.
In Figs.~\ref{fig:thinchannelqRRT} and~\ref{fig:thinchannelRRT} we show, through a heatmap, the ability of q-RRT to find passage through a narrow corridor when compared to RRT. The figures depict a heatmap of node placements of $50$ trials of each algorithm in the overlaid environment, where each method is cut off after $25$ oracle calls to analyze ability to quickly place nodes in the narrow corridor. Obstacles are depicted in black, and are distributed randomly on both sides of the narrow corridor, which is created by $2$ large obstacles. 

The q-RRT method placed $47$ nodes in the narrow corridor, compared to RRT's $14$ nodes. Results are presented with no guided sampling or known-goal direction to guide sampling. The q-RRT algorithm is quicker to find paths into narrow corridors toward possible goal locations.
\begin{figure}[h]
	\centering
	\includegraphics[width=.35\textwidth]{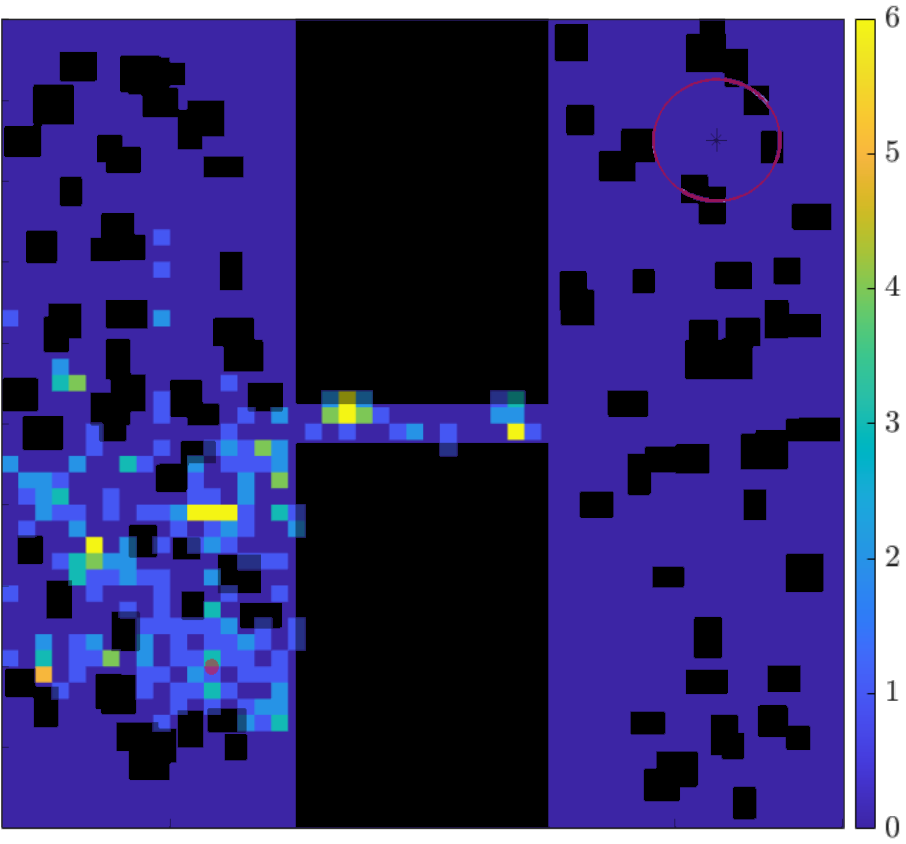}
	\caption{A heatmap of q-RRT's node placement in a narrow corridor environment (up to $25$ oracle calls) over $50$ trial runs, with $47$ nodes in the channel. Obstacles are depicted in black, and a color key of node placements is shown to the right of the graph.}
	\label{fig:thinchannelqRRT}
\end{figure}
\begin{figure}[h]
	\centering
	\includegraphics[width=.35\textwidth]{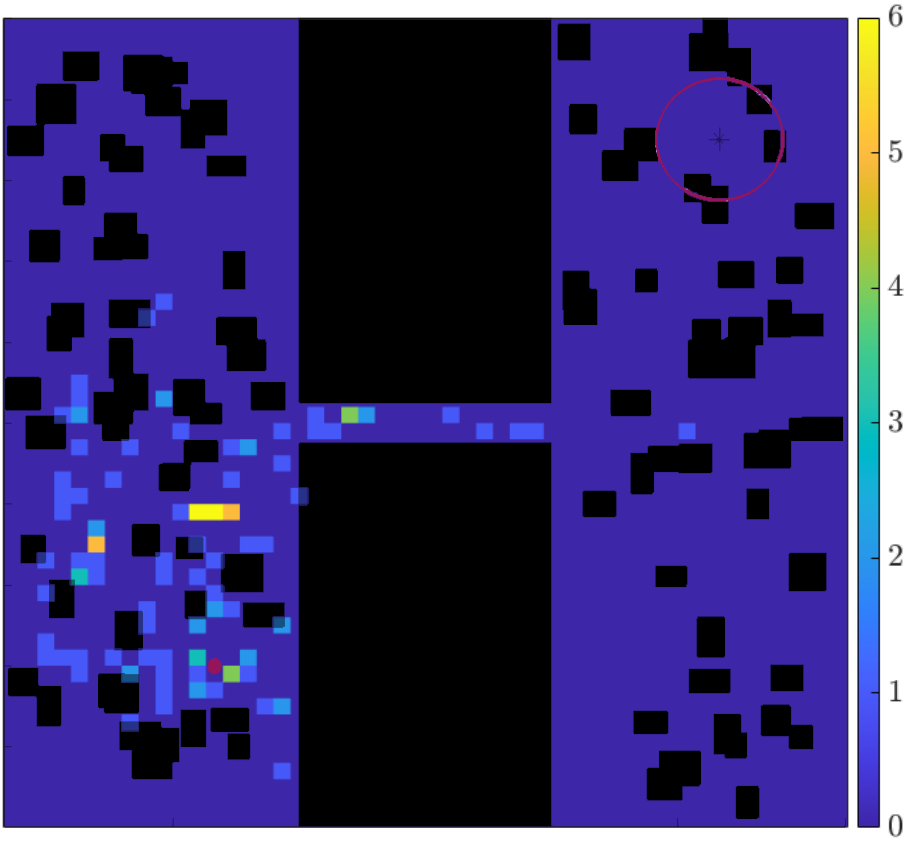}
	\caption{A heatmap of RRT's node placement in a narrow corridor environment (up to $25$ oracle calls) over $50$ trial runs, with $14$ nodes in the channel.}
	\label{fig:thinchannelRRT}
\end{figure}
\subsection{Quantum Database Annealing}\label{section:resultsannealing}
We compare the abilities of Quantum Database Annealing and standard q-RRT database construction to create trees that spread across larger configuration spaces with a large number ($6025$) of obstacles. In this formulation, Quantum Database Annealing is initially creating databases of points at a distance between $2.7$ and $4.2$ units from current nodes, then dropping that range to between $0.8$ and $2.0$ units to fill in the space around the initially spread tree. On the other hand, q-RRT with standard database construction is sampling across the entire configuration space $C$. Both algorithms are using databases of size $2^9$. Fig.~\ref{fig:annealingtree} depicts a $16$-node tree with initial fast expansion made with Quantum Database Annealing, and Fig.~\ref{fig:annealingtreelowtemp} shows continued node addition to a $48$-node tree with lower temperature to fill in the area around the initial spread tree. Fig.~\ref{fig:qrrttree} depicts the standard q-RRT created $16$-node tree in the same environment, to compare against Fig.~\ref{fig:annealingtree}. Obstacles are depicted as small black rectangles, the root node of each tree is shown as a black circle, nodes in each tree are shown as red circles, and parent child connections are shown as black lines.

The resulting trees differ in how spread they are for the same number of oracle calls (the quantum analog of runtime). Quantum Database Annealing initially creates nodes an average of $3.68$ units away (with the above temperature setting) from their parent and standard q-RRT creates nodes an average of $1.70$ units away from their parent. For a fixed number and size of obstacles, it should be noted that sampling parameters affect the average distance in the q-RRT tree, and different average distances can be obtained by varying the size of the database. For uniform sampling over $C$, as the database becomes larger, the average distance drops, as nodes are more frequently found near existing nodes. For equal-sized large databases, in the same amount of (quantum) time, QDA is able to create trees with more spread, as only further away nodes are admitted to the database. The temperature construct allows a balance between exploration and density of nodes, enabling a version of q-RRT that can connect distant regions of a configuration space very quickly before back-filling with lower temperature. 

\begin{figure}[h]
	\centering
	\includegraphics[width=.48\textwidth]{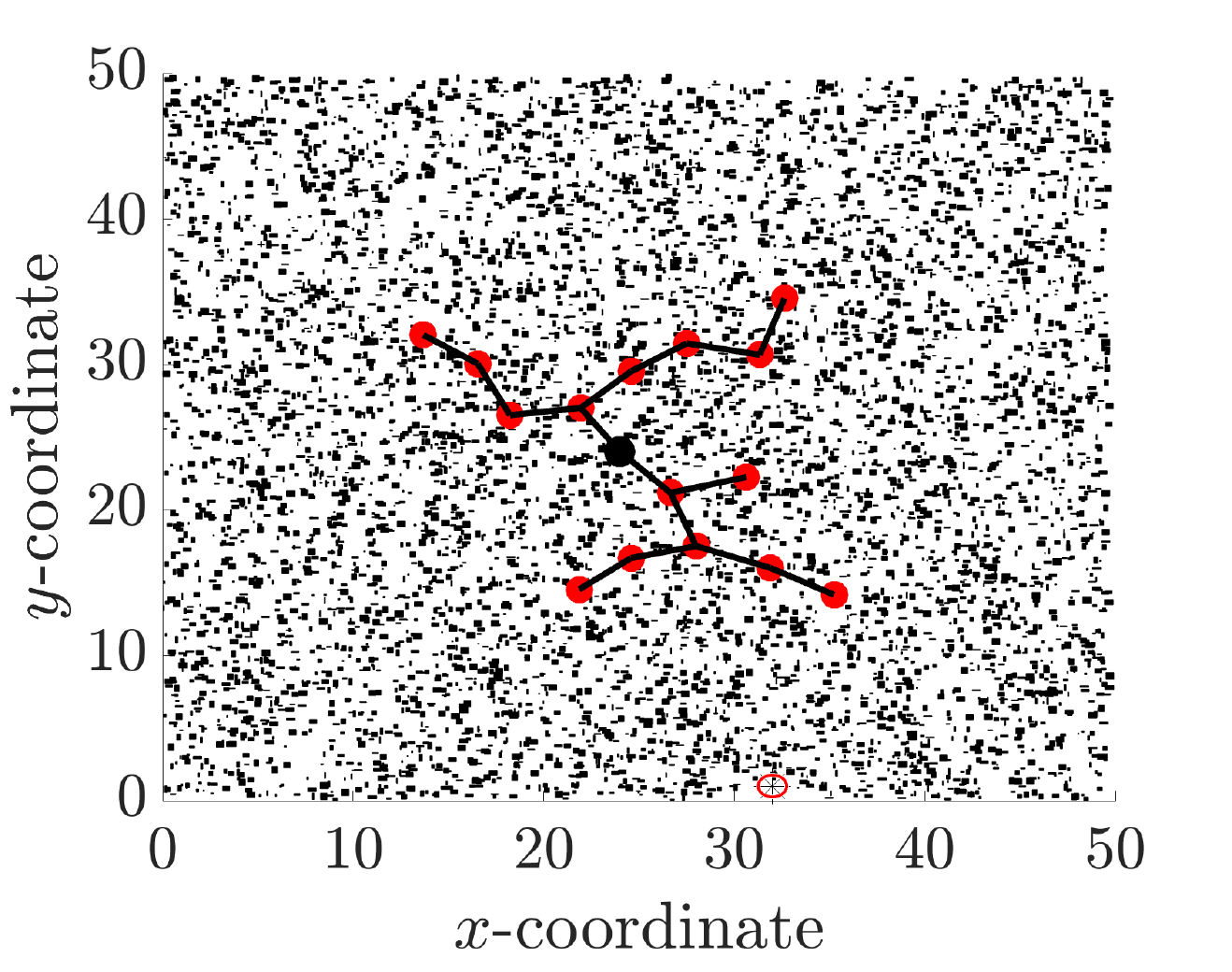}
	\caption{A $16$ node tree created with Quantum Database Annealing with high temperature, showing fast initial exploration. The root node is a black circle and tree nodes are red circles. Parent-child relationships are shown via black lines, and $6025$ obstacles are depicted as small black rectangles.}
	\label{fig:annealingtree}
\end{figure}

\begin{figure}[h]
	\centering
	\includegraphics[width=.48\textwidth]{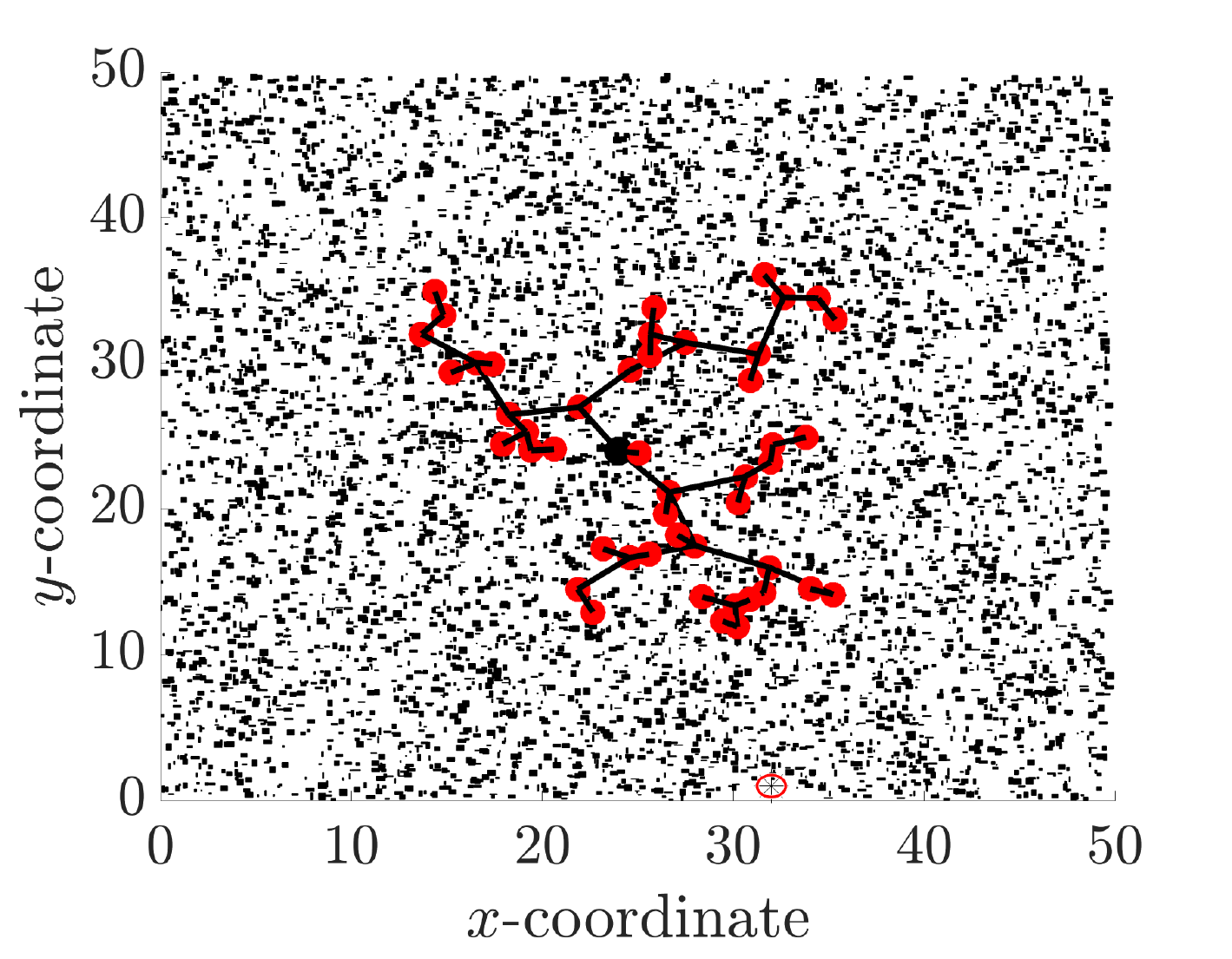}
	\caption{A $48$ node tree created with Quantum Database Annealing with initial high temperature, then a dropping temperature, showing how temperature can be used to fill in the area around a spread tree. The root node is a black circle and tree nodes are red circles. Parent-child relationships are shown via black lines, and $6025$ obstacles are depicted as small black rectangles.}
	\label{fig:annealingtreelowtemp}
\end{figure}

\begin{figure}[h]
	\centering
	\includegraphics[width=.48\textwidth]{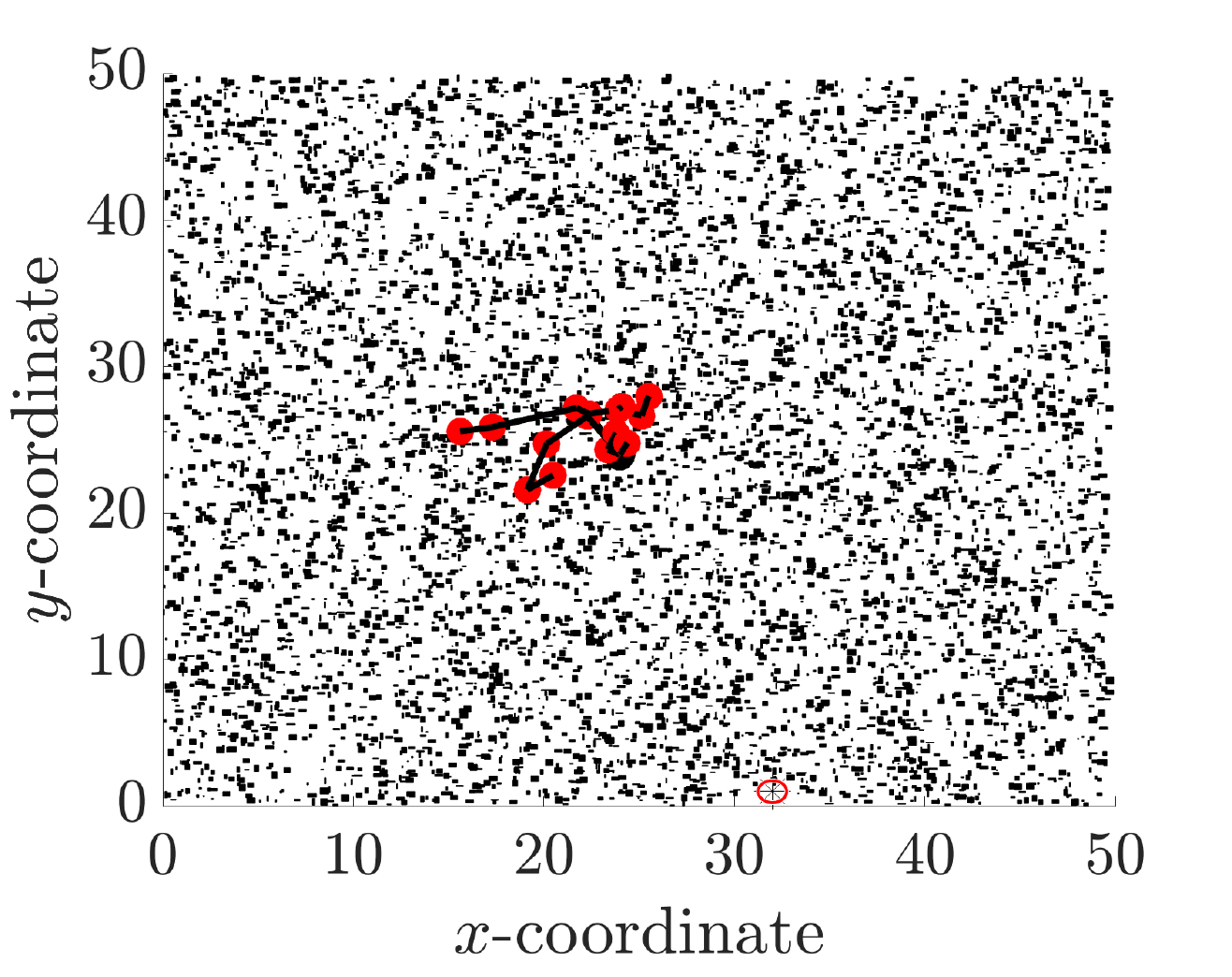}
	\caption{A $16$ node tree created with q-RRT (with standard database construction) in the same $6025$ obstacles environment.}
	\label{fig:qrrttree}
\end{figure}

\section{Conclusion}
To generalize and extend q-RRT, we provide analysis in more general obstacle environments, a formulation of q-RRT with parallel quantum computers, and a database building strategy based on simulated annealing. The Parallel Quantum RRT algorithm uses parallel quantum computers in a manager-worker formulation to provide simultaneous measurements of a shared database, allowing more time-efficient tree construction with a higher exploration speed. We also provide key probability results for parallel quantum computers searching the same database in order to predict parallel architecture efficiency. Quantum Database Annealing uses a temperature construct to guide database construction, providing trees that initially spread more quickly compared to those created with standard database construction, followed by back-fill behavior at lower temperatures. To support these claims, we provide analysis in the form of efficiency and run-time results, heatmaps for speed-of-exploration results, narrow corridor environment results, and database construction comparisons. Future work includes expanding on alternate methods of database construction and creating path planning algorithms that rely on alternate quantum algorithms to QAA.

\bibliographystyle{IEEEtran}
\bibliography{alias,SM,SMD-add,JC}

\begin{IEEEbiography}[{\includegraphics[width=1in,height=1.25in,clip,keepaspectratio]{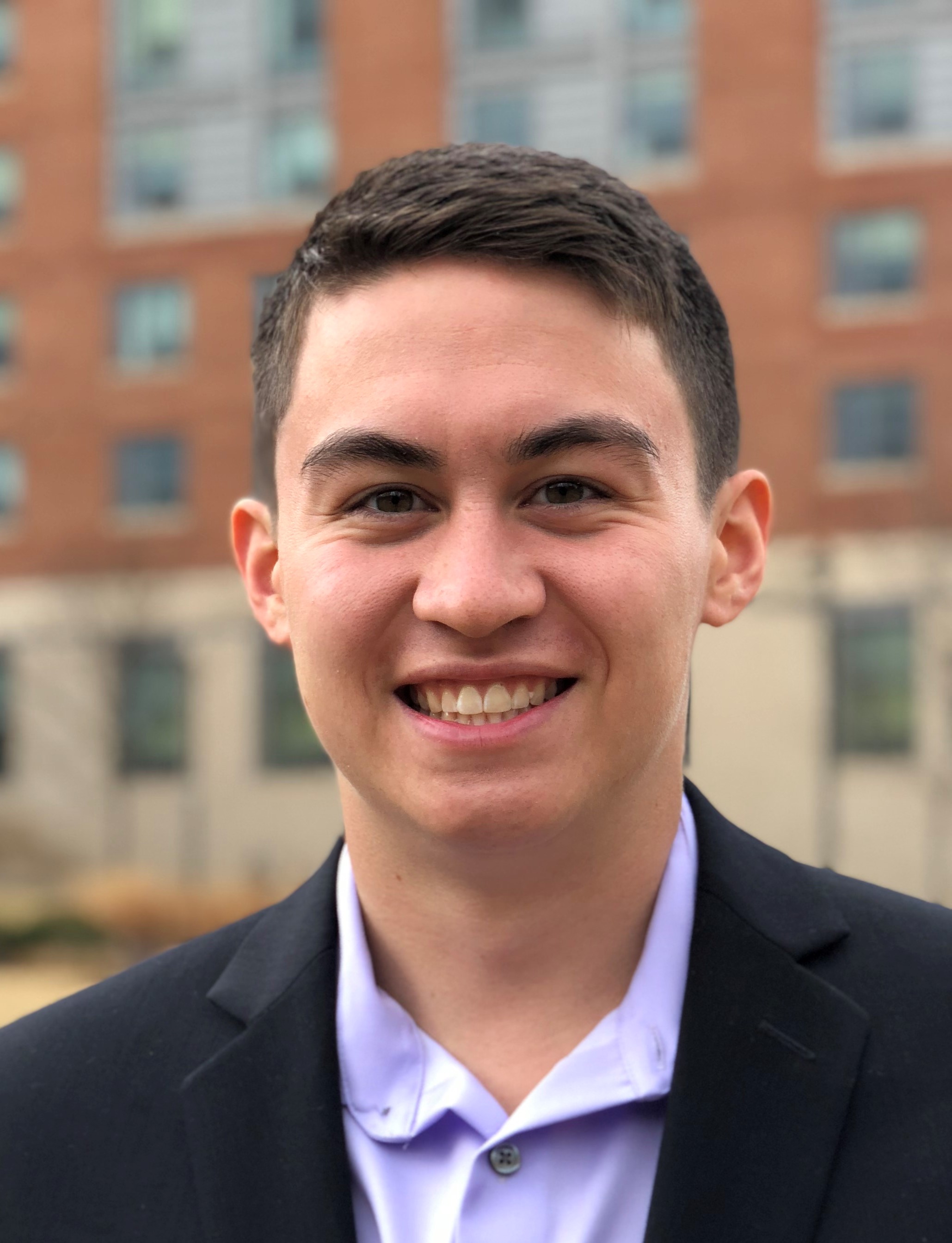}}]{Paul Lathrop} received the B.S. degree in Aerospace Engineering from the University of Maryland, College Park in 2019, the M.S degree in Aerospace Engineering from the University of California, San Diego in 2021, and is currently a Ph.D. candidate in Mechanical and Aerospace Engineering at the University of California, San Diego. His research interests include safety and uncertainty in robotic motion planning algorithms and quantum computing.
\end{IEEEbiography}
\begin{IEEEbiography}[{\includegraphics[width=1in,height=1.25in,clip,keepaspectratio]{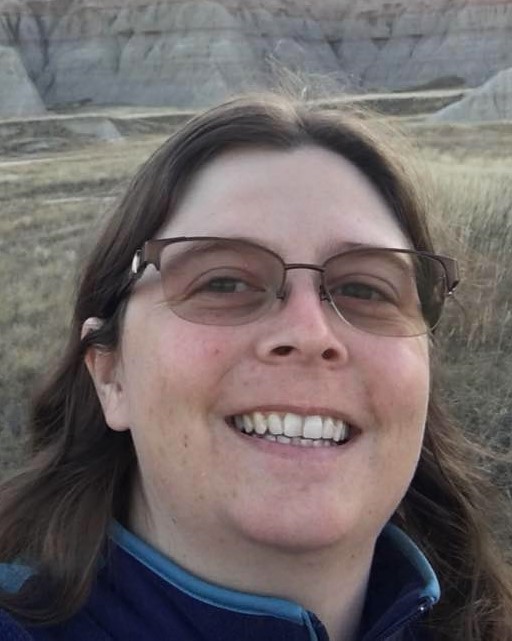}}]{Beth Boardman}
received a Ph.D. degree in aerospace engineering from the University of California, San Diego, USA in 2017 and a B.S. and M.S. in aeronautics and astronautics from the University of Washington, Seattle, USA in 2010 and 2012, respectively. She has worked as a Research and Development Engineer at Los Alamos National Laboratory since 2018. Her research interests include robotics and automation. Beth is currently the Team Leader for the Innovative Robotics team in the Automation and Control group. She is also the LANL Robotics and Automation Summer School program leader.
\end{IEEEbiography}
\begin{IEEEbiography}[{\includegraphics[width=1in,height=1.25in,clip,keepaspectratio]{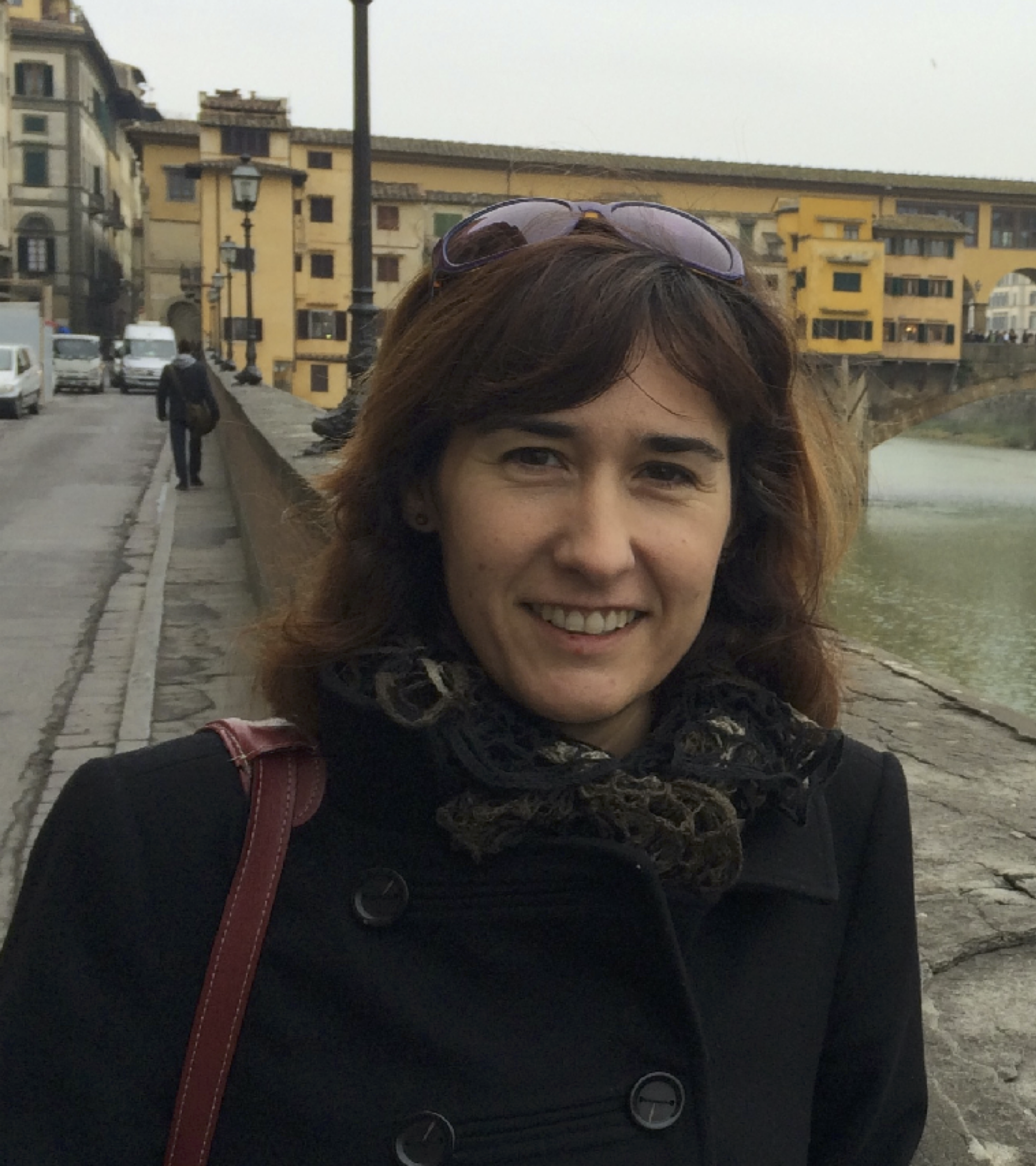}}]{Sonia Mart\'{i}nez}
 (M'02-SM'07-F'18) is a Professor of Mechanical and
  Aerospace Engineering at the University of California, San Diego,
  CA, USA. She received her Ph.D. degree in Engineering Mathematics
  from the Universidad Carlos III de Madrid, Spain, in May 2002. She
  was a Visiting Assistant Professor of Applied Mathematics at the
  Technical University of Catalonia, Spain (2002-2003), a Postdoctoral
  Fulbright Fellow at the Coordinated Science Laboratory of the
  University of Illinois, Urbana-Champaign (2003-2004) and the Center
  for Control, Dynamical systems and Computation of the University of
  California, Santa Barbara (2004-2005).  Her research interests
  include the control of networked systems, multi-agent systems,
  nonlinear control theory, and planning algorithms in robotics. She
  is a Fellow of IEEE. She is a co-author (together with F. Bullo and
  J. Cort\'es) of ``Distributed Control of Robotic Networks''
  (Princeton University Press, 2009). She is a co-author (together
  with M. Zhu) of ``Distributed Optimization-based Control of
  Multi-agent Networks in Complex Environments'' (Springer, 2015).
  She is the Editor in Chief of the recently launched \textit{CSS IEEE Open
  Journal of Control Systems.}
\end{IEEEbiography}
\EOD
\end{document}